\documentclass[10pt,twocolumn]{IEEEtran}
\usepackage{amsmath,amssymb,latexsym,cite}
\usepackage{pstricks,graphicx}
\usepackage{pst-node,pst-plot,pstricks-add}

\usepackage{pstricks,graphicx,subfigure}
\usepackage{latexsym,cite}

\setcounter{tocdepth}{3}

\newtheorem{theorem}{Theorem}

\newtheorem{lemma}{Lemma}

\newtheorem{example}{Example}

\def\prob{\mathbb{P}}
\def\expe{\mathbb{E}}
\renewcommand{\bar}{\overline}

\newcommand{\kldiv}[2]{\ensuremath{D\left(#1\ \left\|\ #2 \right. \right)}}

\newcommand{\empMI}[2]{\ensuremath{I\left(#1, #2 \right)}}
\newcommand{\maxvar}[2]{\ensuremath{d_{\max}\left(#1, #2 \right)}}

\def\mbf{\mathbf}
\def\mc{\mathcal}

\def\scostf{l}
\def\scostb{\Lambda}
\def\scostmax{\lambda^{\ast}}

\def\denc{\phi}
\def\ddec{\psi}
\def\renc{\Phi}
\def\rdec{\Psi}

\def\decreg{D}
\def\rdecreg{\mathbf{D}}
\def\stdmaxerr{\varepsilon}
\def\nosymaxerr{\hat{\varepsilon}}

\def\listmaxerr{\varepsilon_L}

\def\nosyjam{J}
\newcommand{\chunkerr}[1]{\delta_{#1}}

\newcommand{\maxerrs}[2]{\ensuremath{\stdmaxerr(#1,#2)}}

\def\robrlerr{\nosymaxerr}

\newcommand{\typ}[1]{\ensuremath{T_{#1}}}
\newcommand{\etyp}[2]{\ensuremath{T_{#1}^{#2}}}
\newcommand{\btyp}[2]{\ensuremath{\mc{T}_{#2}({#1})}}
\newcommand{\shell}[3]{\ensuremath{T_{#1}^{#2}(#3)}}

\def\wstd{\mc{W}_{\mathrm{std}}}
\def\wdep{\mc{W}_{\mathrm{dep}}}

\def\mlconeps{\epsilon_{4}}

\def\nosyeps{\epsilon}
\def\listloss{\epsilon_{1}}
\def\autherr{\epsilon}

\def\stdrleps{\epsilon}
\def\stdrlloss{\delta}

\def\robrlloss{\epsilon}
\def\robrlestgap{\delta}
\def\robrlgap{\xi}

\def\chunk{\ensuremath{c}}

\newcommand{\cvar}[2]{\ensuremath{\mathbf{#1}^{(#2 \chunk)}}}
\newcommand{\csi}[1]{\mc{V}_{#1}}
\newcommand{\normcsi}[1]{\hat{\lambda}_{#1}}
\newcommand{\truenorm}[1]{\lambda_{#1}}
\newcommand{\csitot}[1]{\hat{\Lambda}_{#1}}
\newcommand{\truetot}[1]{\Lambda_{#1}}
\newcommand{\codenorm}[1]{\tilde{\Lambda}_{#1}}

\newcommand{\posscsi}[1]{\mathbb{V}(#1)}

\def\csiexp{v}
\def\dectime{\mbf{M}}
\def\possdec{\mc{M}}
\def\Mhi{M^{\ast}}
\def\Mlo{M_{\ast}}

\def\minrate{R_{\min}}
\def\decide{\tau}
\def\remp{R_{\mathrm{emp}}}

\def\capstd{\ensuremath{C_{\mathrm{std}}}}
\def\capdep{\ensuremath{C_{\mathrm{dep}}}}

\def\hb{\ensuremath{h_b}}

\newrgbcolor{dkgray}{0.3 0.3 0.3}

\title{Rateless codes for AVC models}
\author{
Anand~D.~Sarwate~\IEEEmembership{Member,~IEEE,} and~Michael~Gastpar~\IEEEmembership{Member,~IEEE}
\thanks{Manuscript received December 7, 2007;
accepted September 30, 2009.  }
\thanks{Part of this work was presented at the 2007 IEEE Symposium on Information Theory in Nice, France \cite{SarwateG:07nosy}, and at the 2007 IEEE Information Theory Workshop in Tahoe, CA \cite{SarwateG:07rateless}.  }
\thanks{A.D. Sarwate was with the Department of Electrical Engineering and Computer Sciences, University of California, Berkeley, and is now with the Information Theory and Applications Center, University of California, San Diego, La Jolla, CA 92093-0447 USA.}
\thanks{M. Gastpar is with the Department of Electrical Engineering and Computer Sciences, University of California, Berkeley, Berkeley CA 94720-1770 USA.}
\thanks{The work of A.D. Sarwate and M. Gastpar was supported by the National Science Foundation under award CCF-0347298.}
}
\date{\today}

\begin{document}

\maketitle

\begin{abstract}
The arbitrarily varying channel (AVC) is a channel model whose state
is selected maliciously by an adversary.  Fixed-blocklength coding assumes
a worst-case bound on the adversary's capabilities, which leads to pessimistic results.  This paper defines a variable-length perspective on this problem, for which achievable rates are shown that depend on the realized actions of the adversary.
Specifically, rateless codes are constructed which require a limited amount of common randomness.  These codes are constructed for two kinds of AVC models.  In the first the channel state cannot depend on the channel input, and in the second it can.  As a byproduct, the randomized coding capacity of the AVC with state depending on the transmitted codeword is found and shown to be achievable with a small amount of common randomness.  The results for this model are proved using a randomized strategy based on list decoding.
\end{abstract}

\section{Introduction}

Modern communication platforms such as sensor networks, wireless ad-hoc networks, and cognitive radio involve communication in environments that are difficult to model.  This difficulty may stem from the cost of measuring channel characteristics, the behavior of other users, or the interaction of heterogeneous systems using the same resources.  These complex systems may use extra resources such as feedback on a low-rate control channel or common randomness to overcome this channel uncertainty.  We are interested in how such resources can be used to deal with interference that is difficult to model or which may depend on the transmitted codeword.

Inspired by some of these challenges, we approach the problem from the perspective of variable-length coding over arbitrarily varying channels (AVCs).  The AVC is an adversarial channel model in which the channel is governed by a time varying state controlled by a \textit{jammer} who wishes to maximize the decoding error probability.  For fixed-blocklength coding, the capacity is the worst-case over all allowable actions of the jammer.  However, in some cases the worst-case may be unduly pessimistic.  Correspondingly, we ask the following questions : can variable-length codes be developed for AVC models that adapt to the realized actions of the jammer?  How much feedback and common randomness is needed to enable these codes?

In this paper we study \textit{randomized coding} for two different models based on the AVC.  In a randomized code the encoder and decoder have a shared source of common randomness unknown to the jammer.  This common randomness acts as a shared \textit{key} to mask the coding strategy from the jammer.  The first model we study is the AVC under maximal error and randomized coding, in which the state sequence is chosen independently of the transmitted codeword.  The second model is an AVC in which the jammer can choose the state sequence based on the transmitted codeword.  This may be an appropriate model for a multi-hop network in which an internal node becomes compromised and tampers with transmitted packets.  We call this situation an AVC with ``nosy noise.'' Our first result is a formula for the randomized coding capacity of this AVC.  Our proof uses results on list decoding for AVCs \cite{Ahlswede:73list,Ahlswede:93list,Sarwate:08thesis} with a partial derandomization technique used by Langberg \cite{Langberg:04focs}.

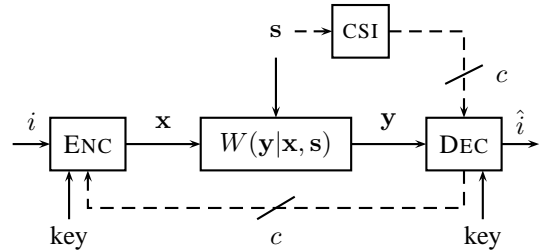
\begin{figure}
\begin{center}
\psset{xunit=0.05cm,yunit=0.05cm,runit=0.05cm}
\begin{pspicture}(0,5)(140,80)

\newrgbcolor{pink}{1.0 0.5 0.5}
\newrgbcolor{babyblue}{0.8 0.8 1.0}
\newrgbcolor{lavender}{1.0 0.7 1.0}

\psline[arrows=->](0,40)(10,40)
\psframe(10,33)(30,47)
\rput(20,40){\textsc{Enc}}

\psline[arrows=->](30,40)(50,40)
\psframe(50,33)(90,47)
\rput(70,40){$W(\mbf{y} | \mbf{x},\mbf{s})$}

\psline[arrows=->](90,40)(110,40)
\psframe(110,33)(130,47)
\rput(120,40){\textsc{Dec}}

\psline[arrows=->](130,40)(140,40)

\rput(5,46){$i$}
\rput(40,46){$\mbf{x}$}
\rput(100,46){$\mbf{y}$}
\rput(135,46){$\hat{i}$}

\psline[arrows=->](70,63)(70,47)
\rput(70,70){$\mbf{s}$}
\psline[arrows=->,linestyle=dashed](75,70)(85,70)
\psframe(85,63)(100,77)
\rput(92.5,70){\textsc{csi}}
\psline[arrows=->,linestyle=dashed](100,70)(120,70)(120,47)
\psline(115,56)(125,61)
\rput(130,58){$\chunk$}

\psline[arrows=<-,linestyle=dashed](20,33)(20,23)(120,23)(120,33)
\psline(65,20)(75,26)
\rput(70,15){$\chunk$}

\psline[arrows=->](15,20)(15,33)
\psline[arrows=->](125,20)(125,33)
\rput(15,15){key}
\rput(125,15){key}

\end{pspicture}
\caption{Rateless communication system.  The encoder and decoder share a source of common randomness.  A single bit of feedback is available every $\chunk$ channel uses for the decoder to terminate transmission.  Some partial information about the channel state is available at the decoder every $\chunk$ channel uses in a causal fashion. \label{fig:rateless}}
\end{center}
\end{figure}
The main focus of this paper is on the problem of \textit{rateless coding} for these channels using \textit{limited common randomness} and \textit{partial channel state information}, as shown in Figure \ref{fig:rateless}.  Rateless codes were first proposed for erasure channels \cite{Luby:02LT,Shokrollahi:03fountain} and compound channels \cite{Shulman:03commonbc,TchamkertenT:06variable}, and a general model is discussed in \cite{ShamaiTV:07fountain}.  They are strategies that allow a single-bit feedback signal (often called an ACK/NACK for ``acknowledge''/``not acknowledge'') every $\chunk$ channel uses to terminate transmission based on the observed channel output $\mbf{y}$ and channel state information.  In our model, the partial state information takes the form of estimates of the average channel induced by the channel state $\mbf{s}$ over ``chunks'' of size $\chunk$.  In practice this channel information may come from exogenous measurements or from training information in the forward link, as in \cite{EswaranSSG:07indseq}.

We propose a model for partial state information at the decoder which consists of an estimate of the empirical channel.  We then provide partially derandomized rateless code constructions for the two AVC models.  These codes have fixed input type and are piecewise constant-composition, and for accurate partial state information can achieve rates close to the mutual information of a corresponding AVC.  The derandomization for these codes comes from strategies of Ahlswede \cite{Ahlswede:78elimination} and Langberg \cite{Langberg:04focs}.

\subsection*{Related work and context}

The arbitrarily varying channel was first studied in the seminal paper of Blackwell, Breiman, and Thomasian \cite{BlackwellBT:60random}, who found a formula for the capacity under randomized coding and maximal error.  Without randomized coding, the maximal-error problem is significantly harder \cite{KieferW:62avc,AhlswedeW70:binary,Ahlswede80:method,CsiszarK:81max} and is related to the the zero-error capacity \cite{Ahlswede:70weak}.  The AVC model was extended to include constraints on the jammer by Hughes and Narayan \cite{HughesN:87gavc} and Csisz\'{a}r and Narayan \cite{CsiszarN:88constraints,CsiszarN:88positivity,CsiszarN:91gavc}.  For randomized coding, error exponents have also been studied \cite{Ericson:85exponent,HughesT:96exponent,ThomasH:91exponent}.
 
Ahlswede's landmark paper \cite{Ahlswede:78elimination} showed that the average error capacity under deterministic coding $\bar{C}_d$ is $0$ or equal to the randomized coding capacity $C_r$.  Randomized coding gives the same capacity under maximal and average error, but for deterministic coding under average error the capacity may be positive and strictly smaller than the randomized coding capacity when cost constraints are involved \cite{CsiszarN:88positivity}.  However, Ahlswede's technique can be used to show that only $O(\log n)$ bits of common randomness is needed to achieve $C_r(\scostb)$ for AVCs with cost constraint $\scostb$.

In the ``nosy noise'' model, shown in Figure \ref{fig:nosyavc}, has been discussed previously in the AVC literature, where it is sometimes called the A$^{\ast}$VC.  For deterministic coding, knowing the message is the same as knowing the codeword, so the maximal error capacity is as the nosy noise capacity \cite[Problem 2.6.21]{CsiszarKorner}.  In some cases the average error capacity is also the same \cite{AhlswedeW70:binary}.  The capacity under noiseless feedback was later found by Ahlswede \cite{Ahlswede:73fback}.  To our knowledge, for cost-constrained AVCs the problem was not studied until Langberg \cite{Langberg:04focs} found the capacity for bit-flipping channels with randomized coding.  Smith \cite{Smith:07scrambling} has shown a computationally efficient construction using $O(n)$ bits of common randomness.  Agarwal, Sahai and Mitter proposed a similar model with a distortion constraint \cite{AgarwalSM:06allerton}, which is different than the AVC model considered here \cite{Sarwate:08thesis}.

Our study of rateless codes is inspired by hybrid-ARQ \cite{Soljanin:03harq} and recent work that has shown how zero-rate feedback can improve channel reliability \cite{Sahai:08delay,DraperAllerton06,Sahai:07balancing}.  In \cite{EswaranSSG:07indseq} the encoder and decoder use randomly placed training sequences to estimate the channel quality.  Another inspiration was the paper of Draper et. al \cite{DraperFK:05rateless}, which studies an AVC model where the entire state sequence given to the decoder as side information and single-bit feedback acts as an ACK/NACK to terminate decoding.  Our coding schemes can be used to provide a component of the coding strategy of \cite{EswaranSSG:07indseq}, which shows that the rates achievable by Shayevitz and Feder \cite{ShayevitzF:06empirical} for individual sequence channels are also achievable with zero-rate feedback.  

In the next section we describe the channel model and in Section \ref{sec:main} we state the main contributions of this paper.  The two derandomization strategies are discussed in Section \ref{sec:derand}, where we also find the capacity of AVCs with ``nosy noise.''  Sections \ref{sec:std_rateless} and \ref{sec:rob_rateless} contain our rateless code constructions for channels with input-independent and input-dependent state, respectively.

\section{Channel models and definitions \label{sec:defs}}
We will model our time-varying channel by a set of channels $\mc{W} = \{W(y | x, s) : s \in \mc{S}\}$ with finite input alphabet $\mc{X}$, output alphabet $\mc{Y}$, and constrained state sequence \cite{CsiszarN:88constraints}.  This is an arbitrarily varying channel (AVC) model.  If $\mbf{x} = (x_1, x_2, \ldots, x_n)$, $\mbf{y} = (y_1, y_2, \ldots, y_n)$ and $\mbf{s} = (s_1, s_2, \ldots, s_n)$ are length $n$ vectors, the probability of observing the output $\mbf{y}$ given the input $\mbf{x}$ and state $\mbf{s}$ over the AVC $\mc{W}$ without feedback is given by:
	\begin{align}
	W(\mbf{y}| \mbf{x}, \mbf{s}) = \prod_{i=1}^{n} W(y_i | x_i, s_i)~.
	\label{eq:blockchannel}
	\end{align}
In this paper the feedback is used only to terminate transmission, and we compare our achievable rates with those achievable without feedback (c.f. \cite{ShamaiTV:07fountain}).  The interpretation of (\ref{eq:blockchannel}) is that the channel state can change arbitrarily from time to time.  The AVC is an adversarial model in which the state is controlled by a \textit{jammer} who wishes to stymie the communication between the encoder and decoder.   As we will see, the knowledge held by the adversary can be captured in the error criterion.

One extension of this model is to introduce constraints on the input and state sequences \cite{CsiszarN:88constraints}.  For simplicity we will only assume constraints on the state.  Let $\scostf : \mc{S} \to \mathbb{R}^{+}$ be a cost function on the state set, where $\min_{s} \scostf(s) = 0$ and $\max_{s \in \mc{S}} \scostf(s) = \scostmax < \infty$.  The cost of the vector $\mbf{s} = (s_1, s_2, \ldots, s_n)$ is the sum of the cost on the elements:
	\begin{align}
	\scostf(\mbf{s}) &= \sum_{i=1}^{n} \scostf(s_i)~.
	\end{align}
In some cases we will impose a total constraint $\scostb$ on the average cost, so that
	\begin{align}
	\scostf(\mbf{s}) &\le n \scostb~. 
	\label{eq:costconstraint}
	\end{align}
If $\scostb \ge \scostmax$ we say the state is unconstrained.  We will define the set 
	\begin{align}
	\mc{S}^n(\scostb) = \{\mbf{s} : \scostf(\mbf{s}) \le n \scostb\}
	\label{eq:validS}
	\end{align} 
to be the set of sequences with average cost less than or equal to $\scostb$.

\subsection{Point-to-point channel coding}

A $(n,N)$ \textit{deterministic code} $\mc{C}$ for the AVC $\mc{W}$ is a pair of maps $(\denc, \ddec)$ with $\denc : [N] \to \mc{X}^n$ and $\ddec : \mc{Y}^n \to [N]$.  The \textit{rate} of the code is $n^{-1} \log N$.  The \textit{decoding region} for message $i$ is $\decreg_{i} = \{\mbf{y} :  \ddec(\mbf{y}) = i\}$ \label{def:decreg}.  We can also write a deterministic code $\mc{C}$ as a set of pairs $\{ (\mbf{x}(i), \decreg_{i}) : i \in [N]\}$ with the encoder $\denc$ and decoder $\ddec$ defined implicitly.  The \textit{error for message $i$ and state sequence $\mbf{s} \in \mc{S}^n(\scostb)$} is given by
	\begin{align}
	\stdmaxerr(i,\mbf{s}) = 1 - W\left( \decreg_{i} | \mbf{x}(i), \mbf{s} \right)~.
	\label{not:fixedSerr}
	\end{align}

\begin{figure}
\begin{center}
\psset{xunit=0.05cm,yunit=0.05cm,runit=0.05cm}
\begin{pspicture}(-10,5)(140,80)

\psline[arrows=->](-10,40)(0,40)
\psframe(0,33)(20,47)
\rput(10,40){\textsc{Enc}}

\psline[arrows=->](20,40)(50,40)
\psframe(50,33)(90,47)
\rput(70,40){$W(\mbf{y} | \mbf{x},\mbf{s})$}

\psline[arrows=->](90,40)(110,40)

\psframe(110,33)(130,47)
\rput(120,40){\textsc{Dec}}
\psline[arrows=->](130,40)(140,40)

\rput(-5,46){$i$}
\rput(35,46){$\mbf{x}(i,k)$}
\rput(100,46){$\mbf{y}$}
\rput(135,46){$\hat{i}$}

\psframe(60,63)(80,77)
\rput(70,70){\textsc{Jam}}
\psline[arrows=->](70,63)(70,47)
\rput(75,55){$\mbf{s}$}

\psline[arrows=<->,linestyle=dashed](10,33)(10,23)(120,23)(120,33)
\psline[linestyle=dashed](70,13)(70,23)
\rput(70,7){$k \in \{1, 2, \ldots, K\}$}

\end{pspicture}
\caption{An arbitrarily varying channel with randomized encoding.   The encoder and decoder share a secret key in $[K]$ that is unknown to the jammer.  \label{fig:randavc}}
\end{center}
\end{figure}
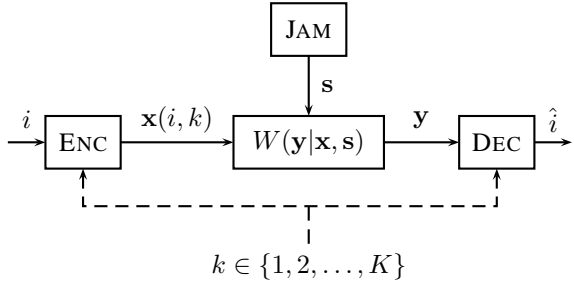
A $(n,N)$ \textit{randomized code} $\mbf{C}$ \label{not:randcode} for the AVC $\mc{W}$ is random variable taking on values in the set of deterministic codes.  It is written as a pair of random maps $(\renc,\rdec)$ where each realization is an $(n,N)$ deterministic code.  If $(\renc, \rdec)$ almost surely takes values in a set of $K$ codes, then we call this an $(n,N,K)$ randomized code.  We can also think of an $(n,N,K)$ randomized code as a family of codes $\{(\denc_k, \ddec_k) : k \in [K] \}$ indexed by a set of $K$ \textit{keys}, as shown in Figure \ref{fig:randavc}.  The \textit{key size} of a randomized code $(\renc,\rdec)$ is the entropy $H(\mbf{C})$ of the code.  In the case where $\mbf{C}$ is uniformly distributed on a set of $K$ codes, the key size is simply $\log K$.   Note that the realization of the code is shared by the encoder and decoder, so the key is known by both parties.  The \textit{rate} of the code is $R = n^{-1} \log N$.  The \textit{decoding region} for message $i$ under key $k$ is $D_{i,k} = \{\mbf{y} :  \ddec_k(\mbf{y}) = i\}$.  In the case where the bound on $K$ is not explicit or unspecified, we write the random decoding region for message $i$ as $\rdecreg_i = \{\mbf{y} : \rdec(\mbf{y}) = i\}$.

For a randomized code we require that the decoder error to be small for each message message \textit{averaged over key values}.  Randomization allows several different codewords to represent the same message.  For maximal error, there are two cases to consider, depending on whether or not the state can depend on the actual \textit{codeword}.

\begin{figure}
\begin{center}
\psset{xunit=0.05cm,yunit=0.05cm,runit=0.05cm}
\begin{pspicture}(-10,5)(140,80)

\psline[arrows=->](-10,40)(0,40)
\psframe(0,33)(20,47)
\rput(10,40){\textsc{Enc}}

\psline[arrows=->](20,40)(50,40)
\psframe(50,33)(90,47)
\rput(70,40){$W(\mbf{y} | \mbf{x},\mbf{s})$}

\psline[arrows=->](90,40)(110,40)

\psframe(110,33)(130,47)
\rput(120,40){\textsc{Dec}}
\psline[arrows=->](130,40)(140,40)

\rput(-5,46){$i$}
\rput(35,46){$\mbf{x}(i,k)$}
\rput(100,46){$\mbf{y}$}
\rput(135,46){$\hat{i}$}

\psline[arrows=->](35,53)(35,67)(60,67)
\psframe(60,63)(80,77)
\rput(70,70){\textsc{Jam}}
\psline[arrows=->](70,63)(70,47)
\rput(75,55){$\mbf{s}$}

\psline[arrows=<->,linestyle=dashed](10,33)(10,23)(120,23)(120,33)
\psline[linestyle=dashed](70,13)(70,23)
\rput(70,7){$k \in \{1, 2, \ldots, K\}$}

\end{pspicture}
\caption{The nosy noise error model -- the jammer knows the codeword $\denc_k(i)$.  \label{fig:nosyavc}}
\end{center}
\end{figure}
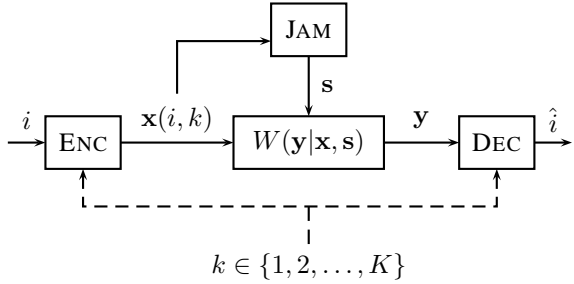

The \textit{standard maximal error} for a $(n,N)$ randomized code over an AVC $\mc{W}$ with cost constraint $\scostb$ is given by
	\begin{align}
	\stdmaxerr &= \max_{i} \max_{\mbf{s} \in \mc{S}^n(\scostb)}
		\expe\left[ 1 
			- W\left( \rdecreg_i | \renc(i), \mbf{s} \right) 
		\right]~,
	\label{eq:maxerrdef}
	\end{align}
where the expectation is over the randomized code $(\renc,\rdec)$.  Here the variables $\rdecreg_i$ and $\renc(i)$ correspond to the same realization of the key.  The \textit{nosy maximal error} for a $(n,N)$ randomized code over an AVC $\mc{W}$ with cost constraint $\scostb$ is given by	
	\begin{align}
	\nosymaxerr &= \max_{i} \max_{\nosyjam : \mc{X}^n \to \mc{S}^n(\scostb)}
		\expe\left[ 1 
			- W\left( \rdecreg_i | \renc(i), \nosyjam(\renc(i)) \right) 
		\right]~,
		\label{eq:nosyerr}
	\end{align}
where the expectation is over the randomized code $(\renc,\rdec)$.  Again, the variables $\rdecreg_i$, $\renc(i)$, and $\nosyjam(\renc(i))$ correspond to the same realization of the key.  We call an AVC under the nosy maximal error criterion an \textit{AVC with nosy noise}.  Figure \ref{fig:nosyavc} shows the channel model under the nosy noise assumption.  In the AVC with nosy noise, the jammer's strategies take the form of mappings $\nosyjam : \mc{X}^n \to \mc{S}^n(\scostb)$ from the codeword vectors to state sequences.  This is a more pessimistic assumption on the jammer's capabilities, since it assumes that it has noncausal access to the transmitted codeword.  Under randomized coding we will show that from a capacity standpoint all that matters is whether the jammer has access to the \textit{current} input symbol.
 
A rate $R$ is called achievable if for every $\epsilon > 0$ there exists a sequence of $(n, N)$ codes of rate $R_n \ge R - \delta$ whose probability of error (maximal or nosy) is at most $\epsilon$.  Whether $R$ is achievable will depend on the error criterion (maximal or nosy).  For a given error criterion, the supremum of achievable rates is the capacity of the arbitrarily varying channel.  We will write $C_r(\scostb)$ for the randomized coding capacity under maximal error with constraint $\scostb$, and $\hat{C}_r(\scostb)$ for the randomized coding capacity with nosy noise and state constraints.

\subsection{Information quantities}

For a fixed input distribution $P(x)$ on $\mc{X}$ and channel $V(y | x)$, we will use the notation $\empMI{P}{V}$ to denote the mutual information between the input and output of the channel.
For a finite or closed and convex set of channels $\mc{V}$ we use the shorthand
	\begin{align}
	\empMI{P}{\mc{V}} = \min_{V \in \mc{V}} \empMI{P}{V}~.
	\end{align}

We define the following sets:
	\begin{align}
		\mc{Q}(\scostb) &= 
			\left\{ Q \in \mc{P}(\mc{S}) : 
				\sum_{s} Q(s) l(s) \le \scostb \right\} 
			\label{eq:def:allowQ} \\
		\mc{U}(P, \scostb) &= 
			\left\{ U \in \mc{P}(\mc{S}|\mc{X}) : 
				\sum_{s,x} U(s|x) P(x) l(s) \le \scostb \right\}~.
			\label{eq:def:allowU}
	\end{align}
For an AVC $\mc{W} = \{W(y|x,s) : s \in \mc{S} \}$ with state constraint $\scostb $ we define two sets of channels: 
	\begin{align}
	\wstd(\scostb) &= 
		\Bigg\{V(y|x) = 
			\sum_{s} W(y|x,s) Q(s) : \nonumber \\
			&\hspace{3cm}
				\ \ Q(s) \in \mc{Q}(\scostb) \Bigg\} 
		\label{eq:indhull} \\
	\wdep(P, \scostb) &= 
		\Bigg\{V(y|x) = 
			\sum_{s} W(y|x,s) U(s|x) : \nonumber \\
			&\hspace{3cm}
				\ \ U(s|x) \in \mc{U}(P, \scostb) \Bigg\}~.
		\label{eq:dephull}
	\end{align}
We will suppress the explicit dependence on $\scostb$.  The set in (\ref{eq:indhull}) is called the \textit{convex closure} of $\mc{W}$, and the set in (\ref{eq:dephull}) is the \textit{row-convex closure} of $\mc{W}$.  In earlier works $\wdep(P, \scostb)$ is sometimes written as $\bar{\bar{\mc{W}}}$ \cite{Ahlswede:78elimination}.

Two information quantities of interest in randomized coding for AVCs are
	\begin{align}
	\capstd(\scostb) &= \max_{P}
		\min_{V \in \wstd(\scostb)} \empMI{P}{V} 
	\label{eq:capstd} \\
	\capdep(\scostb) &= \max_{P}
		\min_{V \in \wdep(P,\scostb)} \empMI{P}{V}~.
	\label{eq:capdep}
	\end{align}
Csisz\'{a}r and Narayan \cite{CsiszarN:88constraints} showed that the randomized coding capacity under maximal error $C_r(\scostb)$ is equal to $\capstd(\scostb)$.  In Theorem \ref{thm:nosynoise} we show that the randomized coding capacity under nosy noise $\hat{C}_r(\scostb)$ is equal to $\capdep(\scostb)$.

\subsection{Rateless codes}

In a rateless code, the decoder can choose to decode at different times based on its observation of the channel output.  We assume that the decoder can inform the encoder that it has decoded in order to terminate transmission.  To simplify the analysis, we consider rateless codes that operate in chunks of length $\chunk(n)$.  For a vector $\mbf{z}$ let $\mbf{z}_{1}^{r}$ denote $(z_1, z_2, \ldots, z_{r})$, and $\cvar{z}{m}$ denote the $m$-th chunk $(z_{(m-1)\chunk+1}, \ldots, z_{m \chunk})$.

The key quantity is the time at which the decoder attempts to decode, which we will
denote by $\dectime\chunk(n),$ {\em i.e.,} decoding is attempted after $\dectime$ chunks.
If this decoding time is appropriately chosen,
then the decoding is successful (with high probability); the corresponding empirical rate is given by
	\begin{align}
	 \remp = \frac{1}{\dectime\chunk} \log_2 N,
	 \label{eq:remp}
	\end{align}
where $N$ is the number of codewords in the codebook.  Defining a rateless code involves not only a codebook, but also a rule according to which the decoder selects the appropriate decoding time $\dectime$.  In our considerations, the decoder performs this selection based on side information about the true channel state (and thus, about the actions of the adversary), which the decoder receives at the end of each chunk.  

More formally, we denote the partial side information (channel estimate) given to the decoder after the $m$-th chunk by $\csi{m}$ \label{def:csi_set}, which takes values in a set $\posscsi{\chunk}$.  We describe the side information model in Section \ref{sec:csi}.  A $(\chunk,N,K)$ randomized rateless code is set of maps $\{(\renc_m, \decide_m, \rdec_m) : m = 1, 2, \ldots\}$:
	\begin{align}
	\renc_m &: [N] \times [K] \to \mc{X}^{\chunk} 
		\label{eq:rateless_encode} \\
	\decide_m &: \mc{Y}^{m\chunk} \times \posscsi{\chunk}^m \times [K] \to \{0,1\} 
		\label{eq:rateless_decide} \\
	\rdec_m &: \mc{Y}^{m\chunk} \times \posscsi{\chunk}^m \times [K] \to [N]~.
		\label{eq:rateless_decode}
	\end{align}
To encode chunk $m$, the encoding function $\renc_m$ uses the message in $[N]$ and key in $[K]$
to choose a vector of $\chunk$ channel inputs.

The decision function $\decide_m$ defines a random variable, called the \textit{decoding time} $\dectime$ of the rateless code:
	\begin{align}
	\dectime = \min \left\{ m : \decide_m(\mbf{y}_{1}^{mc}, \csi{1}^{m}, k) = 1 \right\}~.
	\label{eq:dectime_def}
	\end{align}
Let $\possdec = \{\Mlo, \Mlo+1, \ldots, \Mhi\}$ \label{def:dectime_support} be the smallest interval containing the support of $\dectime$.  The set of possible (empirical) rates for the rateless code are given by $\{ (m \chunk)^{-1} \log N : m \in \possdec\}$.

We can define decoding regions for the rateless code at a decoding time $\dectime = M$.  Note that if $\dectime = M$ we have $\decide_{M}(\mbf{y}_{1}^{M \chunk}, \csi{1}^{M},k) = 1$.  For message $i$, key $k$ and side information vector $\csi{1}^{M}$ we can define a decoding region:
	\begin{align}
	D_{i,k}(\csi{1}^{M})
	= 
		\big\{ \mbf{y}_{1}^{M \chunk} : & 
			\decide_{M}(\mbf{y}_{1}^{M \chunk}, \csi{1}^{M},k) = 1, \nonumber \\
			&
			\rdec_{M}(\mbf{y}_{1}^{M \chunk}, \csi{1}^{M},k) = i
			\big\}~.
	\end{align}
The \textit{maximal} and \textit{nosy noise error} for a $(\chunk,N, K)$ rateless code at decoding time $\dectime = M$ are, respectively, 
	\begin{align}
	\stdmaxerr(M,\mbf{s},\csi{1}^{M}) & \nonumber \\
	&\hspace{-1.5cm}
	= \max_{i \in [N]} 
		\frac{1}{K} \sum_{k=1}^{K}  
		\left(
		1 - W^{M \chunk} \left(
		 D_{i,k}(\csi{1}^{M})
		 \Big| 
		 \renc_{1}^{M}(i, k), 
		 \mbf{s}_{1}^{M \chunk} 
		\right)
		\right)
	\label{eq:rateless_maxerr}
	\\
	\nosymaxerr(M,J,\csi{1}^{M}) & \nonumber \\
	&\hspace{-1.5cm}
	= \max_{i \in [N]} 
		\frac{1}{K} \sum_{k=1}^{K}  \nonumber \\
		&\hspace{-0.5cm}
		\bigg(
		1 - W^{M \chunk} \bigg(
		D_{i,k}(\csi{1}^{M}) 
		\nonumber \\
		&\hspace{2cm}
		 \Big|~ 
		 \renc_{1}^{M}(i, k), 
		 J_M(i, \renc_{1}^{M}(i, k))
		\bigg)
		\bigg)~.
	\label{eq:rateless_nosyerr}
	\end{align}
Here $J = (J_1, \ldots, J_M)$ and $J_M : [N] \times \mc{X}^{M \chunk} \to \mc{S}^{M \chunk}$ is the adversary's strategy.  Note that in these error definitions we do not take the maximum over all $\mbf{s}$ or $\mbf{J}$, because the rate and error at which we decode will depend on the realized state sequence, in contrast to the point-to-point AVC errors in (\ref{eq:maxerrdef}) and (\ref{eq:nosyerr}).

Because we consider rateless codes with finite total blocklength $n$, under some state sequences the decoder may never decide to decode.  Intuitively, this is because the channel is too noisy.  In order to quantify the performance of a rateless code, we must specify the set of state sequences for which the code will decode.

\subsection{Partial channel state information \label{sec:csi}}

Suppose that during the $m$-th chunk of channel uses $\{ (m-1) \chunk + 1, \ldots m \chunk\}$ the channel inputs were $\cvar{x}{m}$ and the state was $\cvar{s}{m}$.  Under the maximal error criterion, we define the average channel under $\mbf{s}$ during the $m$-th chunk by
	\begin{align*}
	V_m(y | x) = 
		\frac{1}{\chunk}
		\sum_{t = (m-1) \chunk +1}^{m \chunk} W(y | x, s_t)~.
	\end{align*}
Under the nosy noise criterion we define the average channel under $\mbf{x}$ and $\mbf{s}$ by
	\begin{align}
	V_m(y | x) 
	   = \frac{1}{N(x | \cvar{x}{m})}
	   	\sum_{t = (m-1) \chunk +1}^{m \chunk} 
			   	W(y | x_t, s_t) \mbf{1}(x_t = x)~.
	\label{eq:rob:avgchan}
	\end{align}
A receiver with full side information would learn the channel $V_m$ explicitly.  We consider instead the case where the receiver is given a set $\csi{m}$ after the $m$-th chunk, where $\csi{m}$ is a subset of channels such that $V_m(y | x) \in \csi{m}$.

We denote the set of possible values for $\csi{m}$ by $\posscsi{\chunk}$.  This is a collection of subsets of $\wstd(\scostb) \cap \mc{P}_{\chunk}(\mc{Y} | \mc{X})$ for maximal error and of $\wdep(\scostb) \cap \mc{P}_{\chunk}(\mc{Y} | \mc{X})$ for nosy noise.  We will assume a polynomial upper bound on the size of $\posscsi{\chunk}$:
	\begin{align}
	|\posscsi{\chunk}| \le \chunk^{\csiexp}~,
	\label{def:csiexp}
	\end{align}
for some $\csiexp < \infty$.

We consider two models for $\csi{m}$: in the first the decoder gets an estimate the empirical cost of the true state sequence, and in the second the decoder gets an estimate of the mutual information induced by the true channel.  For rateless codes under maximal error we will assume that the receiver gets an estimate $\normcsi{m}$ such that the true cost
	\begin{align}
	\truenorm{m} 
		= \frac{1}{\chunk} \sum_{t = (m-1) \chunk + 1}^{m \chunk} 
		l(s_t)
	\label{eq:truenorm}
	\end{align}
satisfies $\truenorm{m} \le \normcsi{m} \le \truenorm{m} + \epsilon$.  The CSI set is then
	\begin{align}
	\csi{m} &= \Bigg\{ 
		V(y | x) = \frac{1}{\chunk} \sum_{i=1}^{\chunk} W(y | x, \hat{s}_i) 
		\nonumber \\
		&\hspace{3cm} 
		: \scostf(\hat{\mbf{s}}) \le \scostf(\cvar{s}{m}) + \chunk \stdrleps
		\Bigg\}~.
	\label{eq:std:consistentCSI}
	\end{align}
We call such CSI \textit{$\epsilon$-cost-consistent}.

For rateless codes under nosy maximal error, we will say a CSI sequence is \textit{$\epsilon$-consistent} for input $P$ if
	\begin{align}
	\empMI{P}{V_m} - \min_{V \in \csi{m}} \empMI{P}{V} \le \epsilon~.
	\label{eq:consistent}
	\end{align}
Our rateless codes for nosy maximal error will assume the CSI sequence is $\epsilon$-consistent.

In our rateless code constructions we use a threshold rule on the minimum mutual information of the channel consistent with the side information $\csi{1}, \csi{2}, \ldots$.  Once the receiver decides to decode, it implements the decoding rule for the rateless code.  The decoder for the codes in Section \ref{sec:std_rateless} is a maximum mutual information (MMI) decoder, and a natural question is whether the channel outputs can be used to decide the decoding time.  One way to do this is for the decoder to restrict the side information set $\csi{m}$ to those channels consistent with the output.

\section{Main results and contributions \label{sec:main}}

\subsection{Point-to-point AVCs}

Our first main result is Theorem \ref{thm:nosynoise}, which is a characterization of the capacity of the AVC with nosy noise.  The proof is given in Section \ref{sec:hashing}.

\begin{theorem} \label{thm:nosynoise}
Let $\mc{W}$ be an AVC with state cost function $\scostf(\cdot)$ and cost constraint $\scostb$.  Then $\capdep(\scostb)$ is the randomized coding capacity of the AVC with nosy noise:
	\begin{align}
	\hat{C}_r(\scostb) = \capdep(\scostb)~.
	\end{align}
Furthermore, for any $\nosyeps > 0$, there exists an $n$ sufficiently large such that the sequence of rate-key size pairs $(R, K(n))$ is achievable with nosy maximal error $\hat{\varepsilon}_r(n)$, where $n^2 \le K(n) \le \exp( n \nosyeps)$ and 
	\begin{align}
	R &= \capdep(\scostb) - \nosyeps \\
	\nosymaxerr(n) &\le \exp(- n \hat{E}(\nosyeps)) + 
		\frac{12 n \capdep(\scostb) \log |\mc{Y}|  }{
				\nosyeps \sqrt{K(n)} \log K(n)}~,
		\label{eq:rob_pt_err}
	\end{align}
where $\hat{E}(a) > 0$ for $a > 0$. 
\end{theorem}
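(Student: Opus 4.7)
My plan is to prove the capacity formula by establishing matching converse and achievability, with the quantitative trade-off coming out of the achievability construction.

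\textbf{Converse.} For a randomized code of rate $R$ with input distribution $P$ and vanishing nosy maximal error, I would have the jammer play, for any fixed $U(s\mid x) \in \mc{U}(P,\scostb)$, the memoryless codeword-adaptive strategy $s_t \sim U(\cdot\mid x_t)$. This strategy is admissible under nosy noise because it uses only the current transmitted symbol, and its cost is at most $\scostb$ in expectation, so by a Chernoff bound it lies in $\mc{S}^n(\scostb)$ with high probability. The resulting effective channel is the DMC $V_U(y\mid x) = \sum_s W(y\mid x,s)U(s\mid x) \in \wdep(P,\scostb)$. Averaging Fano's inequality over the randomized code gives $R \le \empMI{P}{V_U} + o(1)$, and optimizing over $U$ and $P$ yields $R \le \capdep(\scostb)$.

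\textbf{Achievability.} I would use the two-stage approach of Langberg \cite{Langberg:04focs}: start from a fully randomized list-decodable code and then partially derandomize using the key as a hash. In the first stage, I would prove that for any $\nosyeps>0$ there exists a randomized list code of rate $\capdep(\scostb) - \nosyeps/2$ with list size $L(n) = \poly(n)$ and list error at most $\exp(-n\hat{E}_1(\nosyeps))$ against codeword-dependent adversaries constrained to $\mc{U}(P,\scostb)$; this follows from a method-of-types random-coding argument as in \cite{Sarwate:08thesis} applied to the row-convex closure $\wdep(P,\scostb)$, whose list capacity equals $\capdep(\scostb)$. In the second stage, I would use the key $k \in [K(n)]$ to index a pairwise-independent hash family $\{h_k : [N] \to [K_h]\}$; the encoder transmits a list-codeword for the augmented message $(i, h_k(i))$, and the decoder list-decodes to obtain $\mc{L}$ of size at most $L$ and outputs the unique element whose second coordinate matches. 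Because the jammer sees the codeword but not $k$, the probability over $k$ that some wrong list member collides with $i$ on the hash is at most $(L-1)/K_h$.

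\textbf{Assembling the bound and anticipated difficulty.} Setting $K_h$ to be of order $\sqrt{K(n)}\log K(n)$ and absorbing the $\log K_h$ bits of hash into the rate (which costs at most $\nosyeps/2$ since $K(n)\le\exp(n\nosyeps)$) produces the two-term error bound \eqref{eq:rob_pt_err}: the first term is the list-decoding error and the second is the collision bound, with the prefactor $12n\capdep(\scostb)\log|\mc{Y}|/\nosyeps$ absorbing the polynomial list-size estimate $L(n)$ coming from the number of joint types on $\mc{X}\times\mc{S}$ times the size of the $V$-typical output set. The lower bound $K(n)\ge n^2$ ensures the collision term decays. The principal technical obstacle is the first stage: establishing list decoding at rate arbitrarily close to $\capdep(\scostb)$ with a polynomial list and an explicit positive exponent against a \emph{codeword-dependent} adversary (rather than the standard input-independent AVC setting). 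Once this list-decoding primitive is in hand, the Langberg hashing step is a short counting argument, and the converse is standard.
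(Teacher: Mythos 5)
Your converse matches the paper's: the jammer plays a memoryless codeword-adaptive strategy $U(\cdot\mid x_t)\in\mc{U}(P,\scostb)$, reducing the problem to the DMC $V_U\in\wdep(P,\scostb)$; your Chernoff remark about the hard cost constraint $\scostf(\mbf{s})\le n\scostb$ is a detail the paper elides but which is indeed needed. Your achievability also follows the paper's two-stage route (list decoding against a codeword-dependent adversary, then message authentication keyed by the shared randomness). Your pairwise-independent-hash authentication — transmit a list-codeword for $(i,h_k(i))$ and filter the decoded list by the hash — is a valid alternative to the cover-design (Erd\"{o}s--Frankl--F\"{u}redi) construction the paper invokes via Lemma~\ref{lem:randfromlist}, since for $i'\ne i$ the value $h_k(i')$ remains uniform conditioned on $h_k(i)$ (all the jammer can learn from $\mbf{x}$), giving a collision bound of $(L-1)/K_h$.

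The gap is in the first stage. You aim for list size $L(n)=\poly(n)$, but the primitive the paper pulls from \cite{Sarwate:08thesis} is stronger and is what the final bound actually requires: a \emph{constant} list size $L<\lfloor 6\log|\mc{Y}|/\nosyeps\rfloor+1$ depending only on the rate gap $\nosyeps$, at rate $\min_{V\in\wdep(P,\scostb)}I(P,V)-\nosyeps$ with exponentially small error. That constant is where the $12\log|\mc{Y}|/\nosyeps$ in \eqref{eq:rob_pt_err} comes from; the factor of $n$ there is $\log N(n)\approx n\capdep(\scostb)$, not the list size, so your attribution of the prefactor to a polynomial list and type-counting is off. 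More consequentially, with $L(n)=n^d$, $d\ge 1$, and your stated choice $K_h\sim\sqrt{K}\log K$, the collision term is $\Omega(n^{d-1}/\log n)$ at the endpoint $K(n)=n^2$, so the claimed error bound fails there. You could repair this by inflating $K_h$ by a factor of $L(n)$ (the rate penalty $\log K_h/n$ stays $o(1)$), but as written the parameters do not deliver the theorem for all $n^2\le K(n)\le\exp(n\nosyeps)$. The constant-list-size list-decoding lemma for $\wdep(P,\scostb)$ is the missing ingredient, and it is precisely what makes the trade-off work down to $K(n)=n^2$.
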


This theorem is proved by first constructing list-decodable codes with constant list size for cost-constrained AVCs.  These list-decodable codes can be combined with a message-authentication scheme due to Langberg \cite{Langberg:04focs} in Lemma \ref{lem:randfromlist}, which shows that the a secret key can be used to disambiguate the list.  Because $\wstd(\scostb) \subseteq \wdep(\scostb)$, in general we have $\capdep(\scostb) \le \capstd(\scostb)$.  In some cases equality can hold, as in the following example.

\begin{example}[Bit-flipping (mod-two adder) \label{ex:bitflip}]
Consider an AVC with input alphabet $\mc{X} = \{0,1\}$, state alphabet $\mc{S} = \{0,1\}$ and output alphabet $\mc{Y} = \{0,1\}$, with
	\begin{align*}
	y = x \oplus s~,
	\end{align*}
where $\oplus$ denotes addition modulo two.  This is a ``bit-flipping AVC'' in which the jammer can flip the input ($s = 1$).  We choose $l(s) = s$ so that the state constraint $\scostb < 1/2$ bounds the fraction of bits which can be flipped by the jammer.  It has been shown \cite{CsiszarN:88constraints,Langberg:04focs} that
	\begin{align*}
	\capstd(\scostb) &= 1 - \hb(\scostb) \\
	\capdep(\scostb) &= 1 - \hb(\scostb)~,
	\end{align*}
where $\hb(t) = -t \log t - (1-t) \log (1-t)$ \label{def:hb} is the binary entropy function.  In this case, we have $\capstd(\scostb) = \capdep(\scostb)$.  Furthermore, the capacity under randomized coding and maximal error $C_r(\scostb) = \capstd(\scostb)$ and the capacity under randomized coding and nosy noise is $\hat{C}_r(\scostb) = \capdep(\scostb)$.
\end{example}

Although for this bit-flipping example the two max-min expressions have the same value, this is not the case for general AVCs.  In the previous example the addition was taken over the finite field $\mathbb{F}_2$.  If we instead take the addition over the integers the two quantities are different.

\begin{example}[Real adder  \label{ex:realadd}]
Consider an AVC with input alphabet $\mc{X} = \{0,1\}$, state alphabet $\mc{S} = \{0,1\}$ and output alphabet $\mc{Y} = \{0,1,2\}$, with
	\begin{align*}
	y = x + s~.
	\end{align*}
We choose $l(s) = s$ so that the constraint $\scostb$ on the jammer bounds the weight of its input.  For this channel, if $\scostb \ge 1/2$ Csisz\'{a}r and Narayan \cite{CsiszarN:88constraints} showed that $C_r(\scostb) = 1/2$ and is achieved with $P = (1/2, 1/2)$.  However, in the case of nosy noise the capacity is lower when $\scostb > 1/2$ because the jammer can see the codeword, it can selectively set the output to be $1$ if $P = (1/2, 1/2)$.  We have \cite{Sarwate:08thesis}:
	\begin{align*}
	\capdep(\scostb) = h_b\left( \frac{1 - \scostb}{2} \right)
			- \frac{1 + \scostb}{2} h_b\left( \frac{2 \scostb}{1 + \scostb} \right)~.
	\end{align*}
Thus we can see that $\hat{C}_r(\scostb) = \capdep(\scostb) < 1/2$.
\end{example}

\subsection{Rateless coding}

Theorems \ref{thm:std_rateless} and \ref{thm:rob_rateless} provide achievable strategies for rateless coding over channels with input-independent and input-dependent state, respectively.  The proofs of these theorem are given in Section \ref{pf:std_rateless} and Section \ref{pf:rob_rateless}.  To state our results in a way that makes the tradeoff between error probability and blocklength clearer, we will assume 
	\begin{align}
	\chunk(n) &= n^{1/4} 
		\label{eq:chunkassump} \\
	\Mhi(n) &= n/\chunk(n) = n^{3/4}~.
		\label{eq:mhiassump}
	\end{align}
For maximum and minimum rates $R_{\max}$ and $R_{\min}$ the number of messages is $N(n) = \exp( n R_{\min})$ and $\Mlo = \frac{R_{\min}}{R_{\max}} n^{3/4}$.

\begin{figure}
\begin{center}
\psset{unit=0.04cm}
\begin{pspicture}(-18,-10)(180,120)

\psline[arrows=->](0,0)(0,110)
\psline[arrows=->](0,0)(150,0)

\rput(0,115){rate}
\rput(160,0){time}

\psplot[plotpoints=200,algebraic=true,%
    linestyle=dotted,linewidth=1,linecolor=black]{55}{130}{480/(x-50)}

\psline(55,-3)(55,3)
\rput(55,-8){$M_{\ast} c$}

\psline(130,-3)(130,3)
\rput(130,-8){$M^{\ast} c$}

\psline(-3,6)(3,6)
\rput(-12,6){$R_{\min}$}

\psline(-3,96)(3,96)
\rput(-12,96){$R_{\max}$}

\psline(0,50)(8,45)(16,32)(24,40)(32,42)(40,18)(48,33)(54,31)(62,22)(70,33)
\psline[linestyle=dashed](0,46)(8,41)(16,28)(24,36)(32,38)(40,14)(48,29)(54,27)(62,18)(70,29)
\pscircle*(70,29){1.5}

\psframe[linewidth=0.5](100,55)(180,90)
\rput[bl](120,80){\scriptsize{decoding threshold}}
\rput[bl](120,70){\scriptsize{AVC with same cost}}
\rput[bl](120,60){\scriptsize{estimated channel}}
\psline[linestyle=dotted](105,82.5)(115,82.5)
\psline[linestyle=solid](105,72.5)(115,72.5)
\psline[linestyle=dashed](105,62.5)(115,62.5)

\end{pspicture}
\caption{Decoding rate versus time in a rateless code.  The empirical mutual information corresponding to the AVC with the true cost (solid line) varies, and the $\delta$-consistent channel estimates (dashed line) can track it.  Once the channel estimates cross the decoding threshold (dotted line), the receiver terminates transmission and tries to decode. \label{fig:channel_est}}
\end{center}
\end{figure}
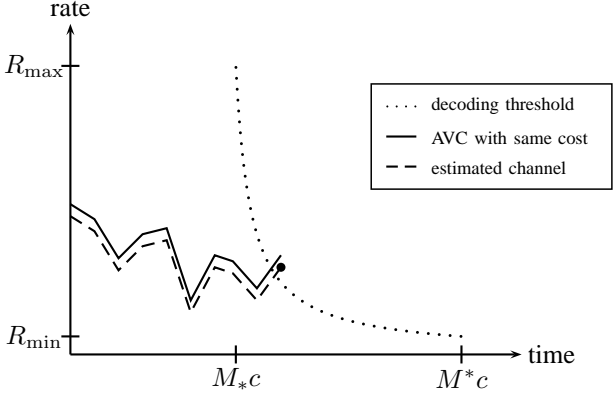

\begin{theorem}  \label{thm:std_rateless}
Let $\mc{W}$ be an AVC with state cost function $\scostf(\cdot)$.  Fix $\stdrleps > 0$, $\minrate > 0$, and input type $P \in \mc{P}(\mc{X})$ with $\min_x P(x) > 0$.  Then there is an $n_0$ sufficiently large such that for all $n > n_0$ there exists a $(\chunk(n), \exp(n \minrate), K(n))$ randomized rateless code with $K(n)/n \to \infty$ whose decoding time satisfies
	\begin{align}
	\dectime &= \min_{\Mlo \le M \le \Mhi} \bigg\{ \frac{n \minrate}{M \chunk} 
			< 
			\empMI{P}{\wstd\left( \frac{1}{M\chunk} l(\mbf{s}_1^{M\chunk}) \right)}
			\nonumber \\
		&\hspace{6cm}
			- g(\stdrleps)
			\bigg\}~,
	\label{eq:stdrl:dectime}
	\end{align}
where $g(\stdrleps) \to 0$ as $\stdrleps \to 0$.  The maximal error of the code at this decoding time satisfies
	\begin{align*}
	\stdmaxerr(\mbf{s},\csi{1}^{\dectime}) &= O\left( \frac{ n }{K(n)} \right)
	\end{align*}
for state sequences $\mbf{s}$ and $\stdrleps$-cost-consistent CSI $\csi{1}^{\dectime}$.
\end{theorem}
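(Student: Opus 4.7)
My plan is to build a piecewise constant-composition randomized code equipped with an MMI decoder whose stopping rule is driven by the aggregate CSI. Fix a type $P_\chunk$ on $\mc X$ of denominator $\chunk$ close to $P$, and for each key $k\in[K(n)]$, message $i\in[N]$, and chunk index $m\le\Mhi$, draw the chunk codeword $\renc_m(i,k)$ independently and uniformly from the type class $\xtyp{P_\chunk}$. After chunk $M$ the decoder forms the aggregate cost estimate $\csitot{M}=\tfrac{1}{M}\sum_{m=1}^M\normcsi{m}$ and sets $\dectime=M$ as soon as $n\minrate/(M\chunk) < \empMI{P}{\wstd(\csitot{M})} - g(\stdrleps)$, at which point it outputs the message $\hat i$ whose concatenated codeword $\renc_1^M(\hat i,k)$ has the largest empirical mutual information with $\mbf y_1^{M\chunk}$ subject to the joint type being compatible with some $V\in\wstd(\csitot{M}+\stdrleps)$. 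Because the CSI is $\stdrleps$-cost-consistent, $\csitot{M}$ is within $\stdrleps$ of the true empirical cost $\truetot{M}$, so this stopping rule coincides with (\ref{eq:stdrl:dectime}) up to terms that can be absorbed into $g(\stdrleps)$.

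I first bound the error of the fully randomized ($K=\infty$) code. Fix a state sequence $\mbf s$ and a candidate decoding time $M$. Conditioning on the decoder stopping at $M$, the aggregate channel $\bar W_{\mbf s}(y|x)=\tfrac{1}{M\chunk}\sum_t W(y|x,s_t)$ lies in $\wstd(\csitot{M}+\stdrleps)$, and the stopping rule guarantees a gap of at least $g(\stdrleps)$ (minus lower-order type-quantization terms) between the empirical rate $n\minrate/(M\chunk)$ and $\empMI{P}{\bar W_{\mbf s}}$. A standard MMI analysis for random constant-composition codes over AVCs with constrained state, applied to the length-$M\chunk$ aggregate, then yields an expected error bound of the form $\exp(-M\chunk\hat E(\stdrleps))$ for some $\hat E(\stdrleps)>0$; a union bound over the polynomially many joint types of $(\renc_1^M(i,k),\mbf y_1^{M\chunk},\mbf s_1^{M\chunk})$ handles the adversary's freedom within each cost class.

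To finish, I derandomize using Ahlswede's elimination technique \cite{Ahlswede:78elimination}. Sampling $K(n)$ codebooks i.i.d., for each pair $(\mbf s,M)$ the normalized error $K(n)^{-1}\sum_k \stdmaxerr_k(\mbf s,M)$ is a mean of bounded i.i.d.\ variables with exponentially small expectation, and a Markov/Chernoff bound shows that it exceeds $O(n/K(n))$ only with exponentially small probability. Since the MMI decoder and threshold test depend on $\mbf s$ only through its joint type with the chunk structure and its empirical cost, and since $\dectime\in\possdec$ ranges over at most $O(n^{3/4})$ values with $|\posscsi{\chunk}|\le\chunk^{\csiexp}$ by (\ref{def:csiexp}), the union over all relevant events is polynomial in $n$; combined with $K(n)/n\to\infty$, this produces a deterministic choice of $K(n)$ keys whose rateless code has maximal error $O(n/K(n))$ uniformly over $\mbf s$ and $\dectime$. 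I expect the main technical obstacle to be calibrating $g(\stdrleps)$ so that a single slack simultaneously absorbs the CSI error, the quantization of $P$ into a $\chunk$-denominator type, and the MMI analysis slack uniformly over all $M\in\possdec$, while preserving $g(\stdrleps)\to0$ as $\stdrleps\to0$.
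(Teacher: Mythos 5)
Your high-level scheme---piecewise constant-composition codewords, an MMI decoder, a CSI-driven threshold stopping rule, and Ahlswede elimination to reduce the key size---is the same architecture as the paper's Lemma~\ref{lem:good_rateless_rand} and the proof in Section~\ref{pf:std_rateless}. Your derandomization step is essentially Lemma~\ref{lem:elim} applied to each truncated codebook, and your observation that the relevant union is polynomial (over joint types, decoding times $M$, and CSI values) is the right way to control it.

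There is, however, a concrete gap where you invoke ``a standard MMI analysis for random constant-composition codes over AVCs.'' Your ensemble draws each chunk independently and uniformly from $\btyp{P}{\chunk}$, so the concatenated length-$M\chunk$ codewords are uniform on $(\btyp{P}{\chunk})^{M}$, a strict subset of $\btyp{P}{M\chunk}$. The Hughes--Thomas random-coding exponent \cite{HughesT:96exponent} (and Csisz\'{a}r--K\"{o}rner-style MMI arguments for AVCs generally) is proved for codewords uniform on the full type class $\btyp{P}{M\chunk}$; the joint-type counting underlying that analysis changes once each chunk is conditioned to have type $P$, so the theorem does not apply verbatim to your product ensemble. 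The paper's proof of Lemma~\ref{lem:good_rateless_rand} introduces a ``thinning'' step precisely to bridge this: it starts with the true constant-composition ensemble on $\btyp{P}{M\chunk}$ (so Hughes--Thomas applies), retains only codewords that land in $(\btyp{P}{\chunk})^{M}$, shows this fraction is at least $\gamma_M = \exp(-M\log(M\chunk)\eta(P))$ with high probability via a Sanov bound, and absorbs the resulting multiplicative blowup of at most $\gamma_{\Mhi}^{-2}$ (which is subexponential in $n$) into the exponential error decay; it then adds a random message-permutation to restore maximal-error symmetry lost by thinning, a step your direct i.i.d.-chunk construction would get for free. Without either redoing the MMI error analysis for the conditioned product ensemble or citing a result that covers it, the exponential error bound in your second paragraph is unsupported. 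Incorporating the thinning argument, or proving the product-ensemble version of the MMI bound directly, is what is needed to close the argument. The remaining pieces---CSI slack absorption into $g(\stdrleps)$ and the nesting of truncated codebooks---are handled in the paper essentially as you sketch.
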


This theorem says that if the CSI estimates the state cost in each chunk to within $\stdrleps$, then the decoder will terminate transmission as soon as the mutual information of the channel exceeds the empirical rate $\frac{n \minrate}{M \chunk}$.  This is illustrated in Figure \ref{fig:channel_est}.  The solid line represents the mutual information of the AVC corresponding to the true state cost $l(\mbf{s}_1^{M\chunk})$, whichs is the worst-case over all state sequences whose cost is less that or equal to the true cost.   The dashed line represents the mutual information of corresponding to the estimated cost.  The dotted line is the empirical rate, so once the estimate crosses the threshold then the decoder will decode.  Furthermore, the error decays as $O(n/K(n))$.  The codebook is constructed by taking a fully randomized constant composition code that is good for an AVC, manipulating it into a rateless code with the desired properties, and reducing the common randomness using Lemma \ref{lem:elim}.

\begin{theorem}  \label{thm:rob_rateless}
Let $\mc{W}$ be an AVC with state cost function $\scostf(\cdot)$.  Fix $\minrate > 0$, $\robrlloss > 0$, input type $P \in \mc{P}(\mc{X})$ with $\min_x P(x) > 0$.  Then there is an $n_0$ sufficiently large such that for all $n>n_0,$ there exists a $(\chunk(n), \exp(n \minrate), K(n))$ rateless code whose decoding time satisfies
	\begin{align*}
	\dectime = \min_{\Mlo \le M \le \Mhi} \left\{ M : 
		\frac{n \minrate}{M \chunk}  
		< 
		\frac{1}{M} \sum_{m=1}^{M} \empMI{P}{V_m} 
				- 2 \robrlloss 
		\right\}~,
	\end{align*}
where $V_m$ is the average channel in (\ref{eq:rob:avgchan}).  The nosy maximal error at this decoding time satisfies 
	\begin{align*}
	\robrlerr(J,\csi{1}^{\dectime}) 
		&\le 
		O\left( \frac{n}{
			\robrlloss \sqrt{K} \log K}  \right)
	\end{align*}
for $K(n) = O(\exp(c))$, state sequences $\mbf{s}$ and $\robrlloss$-consistent side information given by (\ref{eq:consistent}).  
\end{theorem}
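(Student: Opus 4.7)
The plan is to mirror the approach used for the point-to-point nosy-noise capacity in Theorem \ref{thm:nosynoise}, combining list decoding against the nosy adversary with Langberg's authentication-based derandomization (Lemma \ref{lem:randfromlist}), now adapted to the rateless setting via a union bound over the possible decoding times.

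First I would construct a fully randomized constant-composition ensemble of type $P$ and length $\Mhi \chunk$ that is list-decodable against the nosy adversary simultaneously at every intermediate chunk count $M \in \{\Mlo, \Mlo+1, \ldots, \Mhi\}$. Concretely, for every $M$, the received prefix $\mbf{y}_1^{M\chunk}$ should be decoded into a list of size at most some constant $L(\robrlloss)$ that contains the true message whenever
\begin{align*}
\frac{n \minrate}{M\chunk} < \frac{1}{M}\sum_{m=1}^{M} \empMI{P}{V_m} - \robrlloss,
\end{align*}
where $V_m$ is the realized average channel defined in (\ref{eq:rob:avgchan}). This is essentially the list-decoding statement underlying Theorem \ref{thm:nosynoise}, applied to the row-convex closure $\wdep$ consistent with the chunk-by-chunk empirical joint type. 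Because the prefix of a length-$\Mhi \chunk$ constant-composition codeword is itself approximately of type $P$, the same ensemble serves all $M$; a double-exponential concentration argument of the kind used in \cite{Ahlswede:93list,Sarwate:08thesis} gives that the list-decoding property holds for all $M$ simultaneously with probability doubly exponentially close to one.

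Second I would apply Langberg's hash-based disambiguation (Lemma \ref{lem:randfromlist}) so that a key in $[K]$ maps each message to a codeword via an authentication tag. Conditional on the true message belonging to the list, the key separates it from the other $L-1$ list entries except with probability $O(1/(\sqrt{K}\log K))$. Crucially, the same keyed authentication works simultaneously for every candidate $M$, since the list decoder at each $M$ returns a small candidate set to which the test is applied. Third I would define the stopping rule: the receiver uses the $\robrlloss$-consistent CSI to compute the lower bound $\hat{I}_M := \frac{1}{M}\sum_{m=1}^{M} \min_{V \in \csi{m}} \empMI{P}{V}$, which by (\ref{eq:consistent}) satisfies $\hat{I}_M \ge \frac{1}{M}\sum_m \empMI{P}{V_m} - \robrlloss$, and stops at the first $M$ with $\hat{I}_M > n\minrate/(M\chunk) + \robrlloss$. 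The $2\robrlloss$ slack in the theorem is the sum of this consistency gap and the $\robrlloss$ safety margin required by the list decoder. A union bound over the $\Mhi - \Mlo = O(n^{3/4})$ stopping times inflates the per-time hashing error $O(1/(\robrlloss\sqrt{K}\log K))$ into the stated overall bound $O(n/(\robrlloss\sqrt{K}\log K))$.

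The main obstacle is making the list-decoding guarantee hold uniformly across all intermediate lengths $M$ against an adversary who sees the full-length codeword of length $\Mhi\chunk$, not merely its $M\chunk$-prefix. The way around this is to observe that, for randomized coding of a constant-composition ensemble, the noncausal information the adversary possesses about the tail of the codeword is useless for attacking the $M\chunk$-prefix, because the key is independent of the message and the tail is independent of the prefix conditional on the key; thus the worst-case nosy adversary for the prefix reduces to one that depends only on the first $M\chunk$ input symbols, and the list-decoding bound from the proof of Theorem \ref{thm:nosynoise} applies to each $M$ separately before the union bound. The remaining work---verifying the constant-composition property of prefixes up to negligible type slack, checking that the $K(n)$-fold derandomization via Lemma \ref{lem:elim}-style reduction preserves the simultaneous list-decoding guarantees, and absorbing lower-order terms into the $g(\robrlloss)$-style slack---is routine and parallels the arguments in Section \ref{pf:std_rateless}.
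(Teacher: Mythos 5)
Your high-level skeleton (list decode against the nosy adversary, then disambiguate via Langberg's authentication lemma, with a CSI-driven stopping rule) matches the paper's, but several of your intermediate steps would not go through as stated, and one is simply wrong.

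First, the codebook. You propose sampling from the single constant-composition shell $\btyp{P}{\Mhi\chunk}$ of full blocklength and arguing that prefixes are ``approximately'' of type $P$. The paper instead samples from $(\btyp{P}{\chunk})^{\Mhi}$, i.e.\ codewords that are \emph{exactly} constant-composition in every chunk. This is not a cosmetic choice: in your version, the prefix $\mbf{x}_1^{M\chunk}$ and the tail are \emph{not} independent given the key, because the global type constraint couples them. Your stated reason for why the adversary's noncausal view of the tail is harmless --- ``the tail is independent of the prefix conditional on the key'' --- is therefore false for your codebook. In the paper's piecewise construction the chunks are independent, so the truncation to length $M\chunk$ has a clean distribution and the list-decoding statements of Lemmas \ref{lem:rob_biglist}--\ref{lem:cc_const} can be stated directly at each $M$.

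Second, the uniformity over $M$. You invoke ``a double-exponential concentration argument of the kind used in \cite{Ahlswede:93list}'' to claim a constant list size simultaneously at all $M$. That argument establishes constant-list decodability at a \emph{fixed} blocklength for a fixed set of channels; it does not by itself give you a constant list uniformly over $O(n^{3/4})$ truncation lengths, each with its own CSI set $\csi{1}^M$ and realized $V_m$'s. The paper's route is structurally different and does the needed work: Lemma \ref{lem:rob_biglist} makes the \emph{entire} shell $\btyp{P}{\chunk}$ a list code with list size exponential in $\chunk$; Lemma \ref{lem:list_chunk} concatenates these into exponential-size ``decoding bins'' $B(M,\mbf{y}_1^{M\chunk},\csi{1}^M)$ at every $M$; and Lemma \ref{lem:cc_const} subsamples a fixed $N=\exp(n\minrate)$ codewords and uses Sanov's theorem plus a union bound over all $(M,\mbf{y}_1^{M\chunk},\csi{1}^M)$ bins to show the number of sampled codewords landing in any bin is at most a constant $L = O(\log|\mc{Y}|/\robrlloss)$ whenever the stopping condition (\ref{eq:rob_thresh}) holds. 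That chain of lemmas is precisely the missing content that your double-exponential appeal papers over.

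Third, the bookkeeping of the error bound. You attribute the factor of $n$ in $O(n/(\robrlloss\sqrt{K}\log K))$ to a union bound over the $O(n^{3/4})$ possible decoding times. In fact the authentication lemma (Lemma \ref{lem:randfromlist}) is applied \emph{once}, at the realized decoding time $\dectime=M$, and the factor $n$ comes from $\log N = n\minrate$ in the bound $2L\log N/(\sqrt{K}\log K)$. The union bound over $M$ only affects the exponentially small list-decoding term $M\exp(-\chunk E(\robrlloss))$, not the hashing term. Your accounting happens to produce the same order of magnitude, but for the wrong reason, and the mismatch would matter if you tried to tighten the result.

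In short: the frame is right, but the body of the argument --- exponential-list bins per chunk, concatenation, Sanov-based subsampling to constant list size uniformly in $M$, then a single application of authentication --- is where the real content of the proof lives, and your proposal does not supply it.
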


The theorem says that there exists a rateless code which can be decoded as soon as the empirical mutual information $\chunk \sum_{m=1}^{M} \empMI{P}{V_m}$ is enough to sustain the $n \minrate$ bits for the message, assuming the side information is $\robrlloss$-consistent.  This threshold is sufficient to guarantee decoding error probability that decays like $1/\sqrt{K} \log K$ for an AVC with nosy noise.

In this code, the decoder decodes each chunk of $\chunk$ channel into a list of possible messages.   As more chunks are received, the list size shrinks and the decoding time $\dectime$ is chosen to guarantee that the list size is bounded by a constant.  Lemma \ref{lem:randfromlist} shows that this code can be used as part of a randomized code in which the secret key disambiguates the list at the decoder.

\begin{example}[Bit-flipping (mod-two adder)]
Consider the mod-two additive AVC described in Example \ref{ex:bitflip} on page \pageref{ex:bitflip} where the partial side information $\csi{m}$ as an estimate $\normcsi{m}$ of the empirical Hamming weight of the state sequence $\cvar{s}{m}$.  
The receiver tracks the empirical weight of the state sequence to compute an estimate $\csitot{M}$ of the crossover probability.  Theorems \ref{thm:std_rateless} and \ref{thm:rob_rateless} both give rateless codes that can decode as soon as the estimated empirical mutual information $M \chunk (1 - \hb(\csitot{M})) $ exceeds the size of the message ($\log N$ bits).  As $R_{\min}$ can be as small as we like, these codes can work for empirical state sequences with Hamming weight arbitrarily close to $1/2$.  The realized rate is within $\epsilon$ of $1 - \hb(\csitot{M})$, but the two codes differ greatly in the dependence of the error probability on the amount of common randomness.  When the bit-flips cannot depend on the transmitted codeword, the error decays with $K^{-1}$, and when they can it decays with $(\sqrt{K} \log K)^{-1}$.  
\end{example}

\paragraph{Remarks on the example}  For the bit-flipping example, the rates guaranteed by both theorems are close to the \textit{capacity} of the AVC with the corresponding cost constraint.  However, in general this may not be the case.  Both coding schemes use a fixed input type $P$, which is is a common feature of rateless coding strategies \cite{Shulman:03commonbc,DraperFK:05rateless,EswaranSSG:07indseq} but may result in some loss in rate \cite{ShulmanF:04prior} with respect to an input distribution chosen with knowledge of the empirical state distribution.  It may be possible to adapt the channel input distribution, perhaps using ideas from universal prediction \cite{MerhavF:98universal} but we leave that for future work.  

This scheme can also be used with more general settings for the parameters of the scheme, such as the chunk size.  Finally, we can also consider the case where the side information is merely consistent.  In this setting it is hard to quantify how close the rate at which we decode will be to the true channel, since there are no guarantees on the tightness of the channel estimates.

\section{Two partial derandomization techniques \label{sec:derand}}

In this paper we are interested in the tradeoffs between error probability and the amount of common randomness available to the encoder and decoder.  In this section we will show how to exploit existing techniques partially derandomize the rateless code constructions in Theorem \ref{thm:std_rateless} and \ref{thm:rob_rateless}.  

The ``elimination technique'' is due to Ahlswede  \cite{Ahlswede:78elimination} and uses a key size of $O(n)$ bits to achieve exponential decay in the probability of error\cite{Ericson:85exponent}.  The amount of shared common randomness is on the same order as the data to be transmitted, reminiscent of Shannon's ``one-time pad'' \cite{Shannon:49secret} for cryptography.  Lemma \ref{lem:elim} applies this technique to the randomized codes of Hughes and Thomas \cite{HughesT:96exponent,ThomasH:91exponent} and quantifies an achievable tradeoff between randomization and error decay.  This may be useful in engineering applications in which sharing $O(n)$ bits of key to send $O(n)$ bits of data is unreasonable.  We will use the result to bound the common randomness needed for the rateless codes considered in Section \ref{sec:std_rateless}.

A second derandomization procedure was suggested by Langberg \cite{Langberg:04focs} for what he called an ``adversarial channel'' (in the terminology of this paper, a binary bit-flipping AVC with nosy noise).  The construction starts with a list-decodable code and creates large overlapping subsets of codewords for each key.  These sub-codebooks should be large so that the number of messages is close to the rate of the list-decodable code, and the overlap should be large so that the jammer does not learn the key from seeing the codeword.  The encoder chooses the codeword corresponding to message $m$ in the sub-codebook given by key $k$.  The decoder first uses the list-decoder to find a list of $L$ candidate codewords.  By exploiting a combinatorial construction due to Erd\"{o}s, Frankl, and F\"{u}redi \cite{ErdosFF:85cover}, the sub-codebook structure can be chosen so that with high probability, only one of the codewords in the list at the decoder is in the sub-codebook corresponding to $k$.

\subsection{Derandomization for AVCs with maximal error : ``elimination''}

\begin{lemma}[Elimination technique \cite{Ahlswede:78elimination}]   \label{lem:elim}
Let $J$ be a positive integer and let $\mbf{C}$ be an $(n,N,J)$ randomized code with $N = \exp(nR)$ whose expected maximal error satisfies 
	\begin{align*}
	\max_{\mbf{s} \in \mc{S}^n(\scostb)}
		\max_{i}
		\expe_{\mbf{C}}[ \maxerrs{ i }{ \mbf{s} }] 
	&\le \delta(n)~, 
	\end{align*}
for an AVC $\mc{W}$ with cost function $\scostf(\cdot)$ and cost constraint $\scostb$.  Then for all $\mu$ satisfying:
	\begin{align*}
	\mu \log \delta(n)^{-1} - \hb(\mu) \log 2 
		> \frac{n}{K} (R \log 2 + \log |\mc{S}|)~,
	\end{align*}
where $\hb(\mu)$ is the binary entropy function, with probability exponentially small in $n$, the $(n,N,K)$ randomized code uniformly distributed on $K$ iid copies from $\mbf{C}$ will have with maximal probability of error less than $\mu$.
\end{lemma}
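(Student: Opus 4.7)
The plan is a random-selection argument applied to the distribution of $\mbf{C}$ itself. Let $\mbf{C}_1, \ldots, \mbf{C}_K$ be $K$ independent copies drawn from the distribution of $\mbf{C}$ and let $\mbf{C}'$ denote the $(n, N, K)$ randomized code obtained by taking $\mbf{C}'$ uniform on this list. For any message $i \in [N]$ and any admissible state sequence $\mbf{s} \in \mc{S}^n(\scostb)$, the conditional maximal error of $\mbf{C}'$ on $(i,\mbf{s})$, given the list, is the empirical mean $\bar{\varepsilon}(i,\mbf{s}) = \frac{1}{K}\sum_{k=1}^K \varepsilon(i, \mbf{s}; \mbf{C}_k)$, whose summands are iid, $[0,1]$-valued random variables with expectation at most $\delta(n)$ by the hypothesis on $\mbf{C}$. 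The goal is to show that with probability exponentially close to one in $n$, the inequality $\bar{\varepsilon}(i,\mbf{s}) < \mu$ holds simultaneously for every pair $(i,\mbf{s})$, and then to extract a good realization by the probabilistic method.

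For the single-pair tail I would use the sharp Bernoulli moment-generating function bound $\expe[e^{tX}] \le 1 - \delta + \delta e^{t}$, valid for any $X \in [0,1]$ with $\expe[X] \le \delta$; this is an immediate consequence of the chord inequality $e^{tx} \le (1-x) + x e^{t}$ on $[0,1]$. Taking the product over $k$, applying the Markov inequality, and optimizing $t > 0$ yields the binary-divergence bound
\begin{align*}
\prob\bigl(\bar{\varepsilon}(i,\mbf{s}) \ge \mu\bigr) \le \exp\bigl(-K\, \kldiv{\mu}{\delta(n)}\bigr).
\end{align*}
Dropping the nonnegative term $(1-\mu)\log\frac{1-\mu}{1-\delta(n)}$ in the divergence and converting the residual binary entropy from nats to bits gives $\kldiv{\mu}{\delta(n)} \ge \mu \log \delta(n)^{-1} - \hb(\mu) \log 2$, which is precisely the quantity appearing on the left-hand side of the hypothesis.

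The final step is a union bound over the at most $N \cdot |\mc{S}|^{n}$ admissible pairs $(i, \mbf{s})$, which produces a total failure probability of at most
\begin{align*}
\exp\bigl(nR\log 2 + n\log|\mc{S}| - K\bigl[\mu \log \delta(n)^{-1} - \hb(\mu)\log 2\bigr]\bigr).
\end{align*}
The hypothesis of the lemma says exactly that this exponent is strictly negative; since it forces $K$ to grow at least linearly in $n$, the failure probability decays exponentially in $n$, and a good realization $(\mbf{c}_1, \ldots, \mbf{c}_K)$ exists. The only delicate step is insisting on the tight Bernoulli MGF rather than the looser Poisson-type estimate $\expe[e^{tX}] \le \exp(\delta(e^{t}-1))$; the latter would leave an extra $-\mu + \delta$ in the exponent and fail to match the precise form of the hypothesis. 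Everything else is a routine Chernoff-plus-union-bound calculation.
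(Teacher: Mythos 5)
Your proof is correct and is essentially the argument the paper has in mind: the paper states that the proof ``follows directly from the arguments in \cite{Ahlswede:78elimination} and is omitted,'' and what you wrote is exactly Ahlswede's elimination (random-reduction) argument, adapted to maximal error by union-bounding over message indices $i$ in addition to state sequences $\mbf{s}$, which is precisely why the factor $R\log 2$ appears alongside $\log|\mc{S}|$ in the hypothesis. The Chernoff-plus-union-bound structure, the use of the $[0,1]$-valued Bernoulli MGF estimate $\expe[e^{tX}]\le 1-\delta+\delta e^t$, and the probabilistic-method conclusion are all sound, and you are right that the binomial tail (rather than a Poisson-type estimate) is needed to reproduce the exact form of the hypothesis.

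One small slip in the write-up of the divergence lower bound: the term $(1-\mu)\log\frac{1-\mu}{1-\delta(n)}$ is in fact \emph{negative} when $\mu>\delta(n)$ (since $1-\mu<1-\delta(n)$), so it is not a ``nonnegative term'' that may be dropped wholesale. What you actually do --- correctly, given the bound you state --- is decompose
\begin{align*}
\kldiv{\mu}{\delta}
 &= \mu\log\delta^{-1} + \mu\log\mu + (1-\mu)\log(1-\mu) - (1-\mu)\log(1-\delta) \\
 &= \mu\log\delta^{-1} - \hb(\mu)\log 2 \;+\; (1-\mu)\log\tfrac{1}{1-\delta}~,
\end{align*}
and discard the genuinely nonnegative remainder $(1-\mu)\log\frac{1}{1-\delta}=-(1-\mu)\log(1-\delta)\ge 0$. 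The inequality $\kldiv{\mu}{\delta}\ge\mu\log\delta^{-1}-\hb(\mu)\log 2$ that you arrive at is right; only the label on the discarded term is off. With that correction the rest of the argument --- the union bound over at most $N|\mc{S}|^n\le\exp(nR\log 2+n\log|\mc{S}|)$ pairs, and the observation that the hypothesis makes the resulting exponent strictly negative while forcing $K=\Omega(n)$ --- goes through as written.
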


The proof follows directly from the arguments in \cite{Ahlswede:78elimination} and is omitted.  In particular, if the there is a sequence of randomized codes whose errors decay exponentially:
	\begin{align*}
	\delta(n) \le \exp(-\alpha n)~, 
	\end{align*}
then a little algebra shows that we can choose the key size $K(n)$ and the error $\mu$ to satisfy
	\begin{align*}
	\mu \le \frac{n}{K(n)}~,
	\end{align*}
for some $\zeta > 0$.  In particular, the code of Hughes and Thomas \cite{HughesT:96exponent,ThomasH:91exponent} has exponentially decaying error probability, so Lemma \ref{lem:elim} shows that the randomized coding capacity $C_r(\scostb)$ is achievable with common randomness $K(n)$ polynomial in $n$, which corresponds to $O(\log n)$ bits.

\subsection{Derandomization for AVCs with nosy noise : message authentication \label{sec:hashing}}

In \cite{Sarwate:08thesis} it is shown that for any $\epsilon > 0$ and $P \in \mc{P}(\mc{X})$ with $\max_{x} P(x) > 0$, for $n$ sufficiently large there exists a list-decodable code with codewords of type $P$, rate
	\begin{align}
	R = \min_{V \in \wdep(P,\scostb)} I(P,V) - \epsilon ~,
	\label{eq:lem:constlistrate}
	\end{align}
list size
	\begin{align}
	L < \left\lfloor \frac{ 6 \log |\mc{Y}| }{ \epsilon } \right\rfloor + 1~,
	\label{eq:lem:constlistsize}
	\end{align}
and error
	\begin{align*}
	\listmaxerr &\le \exp( - n E(\epsilon) )~,
	\end{align*}
where $E(\mlconeps) > 0$.  The argument is based on those of Ahlswede\cite{Ahlswede:73list,Ahlswede:93list}.

For AVCs with nosy noise, the state can depend on the transmitted codeword.  By combining these list-decodable codes with a message authentication scheme used by Langberg \cite{Langberg:04focs}, we can construct randomized codes for this channel with limited common randomness.  The relationship between the key size, list size, and error is given by the following Lemma.

\begin{lemma}[Message Authentication \cite{Langberg:04focs}]
\label{lem:randfromlist}
Let $\mc{W}$ be an AVC and suppose we are given an $(n,N,L)$ deterministic list-decodable code and probability of error $\autherr$.  For key size $K(n)$ where $K(n)$ is a power of a prime there exists an $(n, N/\sqrt{K(n)}, K(n))$ randomized code with nosy maximal error $\nosymaxerr(\mbf{s})$ such that
	\begin{align}
	\max_{\mbf{s}} \nosymaxerr(\mbf{s}) \le \autherr + \frac{2 L \log N(n)}{\sqrt{K(n)} \log K(n)}~.
	\label{eq:hasherrbound}
\end{align}
\end{lemma}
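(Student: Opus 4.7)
The plan is to follow Langberg's construction \cite{Langberg:04focs}: concatenate the given list-decodable code with a message authentication scheme built from a universal hash family, using the shared key as the hash seed so that the tag of the true message disambiguates the list at the decoder.

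\textbf{Construction.} Assume $K = K(n) = q^2$ for a prime power $q$ (the general prime-power case is handled by a negligible adjustment). Identify each of the $N/\sqrt{K}$ messages with a distinct polynomial $p_i$ over $\mathbb{F}_q$ of degree at most $d = \lceil 2\log N / \log K \rceil - 1$; enough such polynomials exist because $q^{d+1} \geq N/\sqrt{K}$. Take the key to be a pair $k = (a,b) \in \mathbb{F}_q \times \mathbb{F}_q$, so there are $q^2 = K$ keys, and define the hash $h_k(i) = p_i(a) + b \in \mathbb{F}_q$. For key realization $k$, the randomized encoder transmits the codeword of the given list-decodable code corresponding to the pair $(i, h_k(i)) \in [N]$. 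The randomized decoder applies the list decoder to the received $\mbf{y}$ to obtain a list $L(\mbf{y}) \subseteq [N]$ with $|L(\mbf{y})| \leq L$, and then outputs the unique $i'$ for which $(i', h_k(i')) \in L(\mbf{y})$, declaring an error otherwise.

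\textbf{Error analysis.} Fix a message $i$ and an adversary strategy $\nosyjam$. Decoding fails only under one of two events: (i) the list decoder omits $(i, h_k(i))$ from $L(\mbf{y})$, or (ii) some $(i', t') \in L(\mbf{y})$ with $i' \neq i$ satisfies $t' = h_k(i')$. Event (i) has probability at most $\autherr$ by the list-decoding hypothesis, which applies against any nosy-noise adversary. For event (ii), the central step is a conditioning argument. Condition on $h_k(i) = t^*$: because the encoder uses only $h_k(i)$ from the key, the transmitted codeword, the adversary's response, the output $\mbf{y}$, and the list $L(\mbf{y})$ are all functions of $t^*$ alone and hence independent of the residual randomness in $k$ given $t^*$. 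Since conditioning on $p_i(a) + b = t^*$ fixes $b = t^* - p_i(a)$ and leaves $a$ uniform on $\mathbb{F}_q$, for any $i' \neq i$ and any $t' \in \mathbb{F}_q$,
\[\Pr[h_k(i') = t' \mid h_k(i) = t^*] = \Pr_a[(p_{i'} - p_i)(a) = t' - t^*] \leq d/q,\]
since $p_{i'} - p_i$ is a nonzero polynomial of degree at most $d$. A union bound over the at most $L$ pairs in $L(\mbf{y}) \setminus \{(i, t^*)\}$ then gives
\[\Pr[\text{event (ii)} \mid h_k(i) = t^*, L(\mbf{y})] \leq \frac{L d}{q} \leq \frac{2 L \log N}{\sqrt{K} \log K}.\]
Taking expectation over $t^*$ and $L(\mbf{y})$ and combining with the bound on (i) produces the inequality in (\ref{eq:hasherrbound}).

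\textbf{Main obstacle.} The delicate point is that $L(\mbf{y})$ depends on $k$ through the transmitted codeword, which a priori couples the list with the very randomness we would like to exploit for disambiguation. The conditioning on $h_k(i) = t^*$ resolves this: given the true message's tag, the residual randomness $a$ is independent of the list, which is exactly what is needed for the near-universal hashing estimate to deliver the per-pair collision bound $d/q$.
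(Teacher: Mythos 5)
Your proposal follows essentially the same route as Langberg's construction that the paper cites: build a near-universal hash family on top of the given list-decodable code, use the key as the hash seed, and argue that conditioning on the tag $h_k(i)=t^*$ of the true message renders the residual key randomness independent of the decoder's list, so a per-pair collision bound suffices. That conditioning step is exactly the right insight and is the crux of the argument; the polynomial-evaluation hash is one standard explicit instance of the combinatorial object the paper attributes to Erd\H{o}s, Frankl, and F\"{u}redi.

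There is, however, a genuine gap in the per-pair collision estimate as written. You assert that for $i' \neq i$,
\[
\Pr_a\bigl[(p_{i'}-p_i)(a) = t'-t^*\bigr] \leq d/q,
\]
``since $p_{i'}-p_i$ is a nonzero polynomial of degree at most $d$.'' But a nonzero polynomial of degree \emph{zero} (a nonzero constant) makes this probability equal to $0$ or $1$, not at most $d/q$. If two messages $i\neq i'$ happen to be assigned polynomials differing only in their constant term, the bound fails for that pair, and nothing in your construction rules this out: you only ask the $p_i$'s to be distinct polynomials of degree at most $d$, and since $q^{d+1}\geq N/\sqrt{K}$ need not be a tight inequality, an arbitrary choice of $N/\sqrt{K}$ polynomials can certainly contain two that differ by a constant. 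The fix is standard: restrict the $p_i$'s to polynomials of degree at most $d$ with \emph{zero constant term} (equivalently, fold the constant into the key's $b$-shift). Then $p_{i'}-p_i$ is nonzero with zero constant term, hence of degree between $1$ and $d$, and the $\leq d/q$ bound holds. There are $q^d$ such polynomials, so you need $q^d\geq N/\sqrt{K}$, i.e.\ $d\geq 2\log N/\log K - 1$, which your choice $d=\lceil 2\log N/\log K\rceil - 1$ already satisfies; the remainder of the computation and the final bound $2L\log N/(\sqrt{K}\log K)$ then go through unchanged. You should also be a bit more careful about the remark ``the general prime-power case is handled by a negligible adjustment'': if $K=p^m$ with $m$ odd, $\sqrt{K}$ is not an integer and the field $\mathbb{F}_q$ with $q^2=K$ does not exist; one has to round down to the nearest perfect square and absorb the loss into the constant, which should be said rather than waved at.
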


By choosing the appropriate input distribution we can obtain our first new result : a formula for the randomized coding capacity for the AVC under nosy noise.

\begin{proof}[Proof of Theorem \ref{thm:nosynoise}]
To show the converse, note that the jammer can choose a memoryless strategy $U(s | x) \in \mc{U}(P,\scostb)$.  Choosing the worst $U$ yields a discrete memoryless channel whose capacity is $\capdep(\scostb)$, and therefore the randomized coding capacity for this channel is given by $\capdep(\scostb)$.

To show that rates below $\capdep(\scostb)$ are achievable, we first fix $K(n)$ and let $P$ be the input distribution maximizing $\capdep(\scostb)$.  We can use the previous lemma with our result on list-decodable codes to achieve the desired tradeoff.  Using (\ref{eq:lem:constlistsize}), for any $\listloss(n) > 0$ we can choose an $(n,N(n),L)$ list-decodable codebook with codewords of type $P$ such that
	\begin{align*}
	L &= \left\lfloor \frac{ 6 \log |\mc{Y}| }{ \listloss(n) } \right\rfloor + 1 \\
	N(n) &= L \exp( n (\capdep(\scostb) - \listloss(n)) )~,
	\end{align*}
and error
	\begin{align*}
	\listmaxerr &\le \exp( - n E(\listloss(n)) )~.
	\end{align*}
We can use Lemma \ref{lem:randfromlist} to construct an $(n,N(n)/\sqrt{K(n)},K(n))$ randomized code with error probability 
	\begin{align*}
	\nosymaxerr &\le \exp( - n E(\listloss(n)) ) 
		+ \frac{2 L \log N(n)}{\sqrt{K(n)} \log K(n)} \\
	&< \exp( - n E(\listloss(n)) )
		+ \frac{12 n \capdep(\scostb) \log |\mc{Y}|  }{
				\listloss(n) \sqrt{K(n)} \log K(n)}~.
	\end{align*}
The rate of this randomized code is
	\begin{align*}
	R &= \frac{1}{n} \log \frac{N(n)}{\sqrt{K(n)}} \\
	&= \capdep(\scostb) - \listloss(n) 
		- \frac{1}{n} \log \frac{\sqrt{K(n)}}{L}~.
	\end{align*} 
For any $\nosyeps >0$ and $K(n) \le \exp(n \nosyeps)$ we can choose $\listloss(n)$ small enough so that $R = \capdep(\scostb) - \nosyeps$.
\end{proof}

\subsection{An open question: converses for common randomness}

Common randomness is an important resource for coding strategies for the AVC.  The two strategies mentioned in this section show that it is sufficient to have common randomness of $O(\log n)$ bits to achieve the randomized coding capacity.  It is not clear that randomness is \textit{necessary} to achieve rates as high as the randomized coding capacity.  Because the deterministic coding capacity question is notoriously difficult, it would be of interest to prove lower bounds on the common randomness needed to achieve the randomized coding capacity.

\section{Rateless coding with cost information under maximal error \label{sec:std_rateless}}

We now prove Theorem \ref{thm:std_rateless} on rateless coding for AVCs under maximal error.  We develop a coding strategy that chooses a decoding time based on information about the cost of the actual state sequence $\mbf{s}$.  We assume the state sequence $\mbf{s}$ is fixed and estimates $\scostf(\mbf{s}_1^{M \chunk})$ are revealed to the decoder after $M \chunk$ channel uses.  The decoder picks the decoding time $\mbf{M}$ such that the empirical rate is close to the mutual information of an AVC with cost constraint $\scostf(\mbf{s}_1^{\mbf{M} \chunk})$.  We use the construction of Hughes and Thomas \cite{HughesT:96exponent} as a basis for constructing a randomized rateless code using a maximum mutual information (MMI) decoder with unbounded key size, and use Lemma \ref{lem:elim} to partially derandomize the construction.

In this section we will assume the CSI takes the form of (\ref{eq:truenorm})-(\ref{eq:std:consistentCSI}) and that it is $\epsilon$-cost-consistent.  Define
	\begin{align}
	\truetot{M} &= \frac{1}{M} \sum_{m=1}^{M} \truenorm{m} 
		\label{eq:truetot} \\
	\csitot{M} &= \frac{1}{M} \sum_{m=1}^{M} \normcsi{m}~,
		\label{eq:csitot} 
	\end{align}
be the true and estimated cost for the state sequence $\mbf{s}_1^{M\chunk}$.  The number of possible values for $\truenorm{m}$ is at most $(\chunk + 1)^{|\mc{S}|}$, which is an upper bound on the number of types on $\mc{S}$ with denominator $\chunk$.  Without loss of generality we can assume $\normcsi{m}$ takes values in the same set as $\truenorm{m}$.

\subsection{The coding strategy}

Our scheme uses a fixed maximum blocklength $n$ and we will express other parameters as functions of $n$.   For a fixed minimum rate $\minrate$, input distribution $P$, and key size $K(n)$ we will construct a randomized rateless code with chunk size $\chunk(n) = n^{1/4}$ and decoding time $\Mhi(n) = n^{3/4}$ (see  (\ref{eq:chunkassump}) and (\ref{eq:mhiassump})).  We will also use a parameter $\minrate$ which is the minimum rate of the code, so $N(n) = \exp(n \minrate)$.

\textbf{Algorithm I : Rateless coding for standard AVCs}
\begin{enumerate}
\item The encoder and decoder choose a key $k \in [K(n)]$ using common randomness.  The encoder chooses a message $i \in [N(n)]$ to transmit and maps it into a codeword $\mbf{x}(i,k) \in \mc{X}^{n}$.
\item For $m = 1, 2, \ldots, \Mlo - 1$ the encoder transmits $\cvar{x}{m}(i,k)$ in the $m$-th chunk and the decoder sets the feedback bit $\decide_{m}(\mbf{y}_{1}^{(m-1)\chunk},\normcsi{1}^{m-1},k) = 0$.
\item \label{std:enctx} For $m = \Mlo, \ldots, \Mhi = n/\chunk$, if $\decide_{m-1}(\mbf{y}_{1}^{(m-1)\chunk},\normcsi{1}^{m-1},k) = 0$, the encoder transmits $\cvar{x}{m}(i,k)$ in channel uses $(m-1) \chunk + 1, (m-1) \chunk + 2, \ldots, m \chunk$.
\item The decoder receives channel outputs $\cvar{y}{m}$ and an estimate $\normcsi{m}$ of the state cost in the $m$-th chunk.  Define the decision function $\decide_m(\mbf{y}_{1}^{m\chunk},\normcsi{1}^{m},k)$ by
	\begin{align}
		\mbf{1} \left( 
			\frac{\log N}{m \chunk} 
			< 
			\empMI{P}{\wstd(\csitot{m})}
			- \stdrlloss
		\right)~,  \label{eq:std_dec_rule}
	\end{align}
Where $\csitot{m}$ is given by (\ref{eq:csitot}).  If $\decide_m(\cdot) = 1$ then the decoder attempts to decode the received sequence, sets $\hat{i} = \rdec_{m}(\mbf{y}_{1}^{m\chunk},k)$, and feeds back a $1$ to terminate transmission.  Otherwise, the decoder feeds back a $0$ and we return to step \ref{std:enctx}) to send chunk $m+1$.
\end{enumerate}

Our code relies on the existence of a set of codewords $\{\mbf{x}(i,k)$\} which, when truncated to blocklength $m \chunk$, form a good randomized code for an AVC satisfying a given cost constraint.  The key to our construction is that the condition checked by the decision function (\ref{eq:std_dec_rule}) is sufficient to guarantee that the decoding error will be small.  In order to facilitate the analysis of the coding strategy, define the rate $R_M$ at time $M$:
	\begin{align*}
	R_M = \frac{1}{M \chunk} \log N~.
	\end{align*}

\subsection{Randomized codebook construction}

Our codebook will consist of codewords drawn uniformly from the set
	\begin{align}
	(\btyp{P}{\chunk})^{n/c} = 
		\underbrace{\btyp{P}{\chunk} 
			\times \btyp{P}{\chunk}  
			\times \cdots \btyp{P}{\chunk}}_{n/c\ \mathrm{times}}~.
		\label{eq:cc_codebook}
	\end{align}
That is, the codewords are formed by concatenating constant-composition chunks of length $\chunk$.

\begin{lemma}[Fully randomized rateless codebook]	
\label{lem:good_rateless_rand}
Let $\mc{W}$ be an AVC with cost function $\scostf(\cdot)$.  For any $\stdrlloss > 0$, $\minrate > 0$ and input distribution $P \in \mc{P}(\mc{X})$ with $\min_x P(x) > 0$, for sufficiently large blocklength $n$ there exists a randomized rateless code with $N(n) = \exp(n\minrate)$ messages whose decoding time $\dectime$ satisfies (\ref{eq:std_dec_rule}) and whose rate at $\dectime = M$ satisfies
	\begin{align}
	\frac{\log N}{M \chunk} 
			< 
			\empMI{P}{\wstd\left( \frac{1}{M \chunk} \sum_{i=1}^{M\chunk} l(s_i) \right)}
			- f(\stdrlloss)~,
		\label{eq:threshCond}
	\end{align}
for all $\mbf{s}$ and $\stdrlloss$-cost-consistent partial state information sequence $\csi{1}^{M}$, where $f(\stdrlloss) \to 0$ as $\stdrlloss \to 0$.  The error at decoding time satisfies  
	\begin{align}
	\stdmaxerr(M,\mbf{s},\csi{1}^{M}) = O( \exp(- E_3(\stdrlloss) M \chunk) )~,
	\label{eq:expurg_error}
	\end{align}
where $E_3(\stdrlloss) > 0$.
\end{lemma}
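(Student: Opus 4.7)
The plan is to lift the randomized constant-composition construction of Hughes and Thomas \cite{HughesT:96exponent} to a rateless, chunked code and to show that the $\stdrlloss$-gap built into the threshold rule (\ref{eq:std_dec_rule}) buys a uniform positive error exponent. Concretely, I would draw each codeword $\mbf{x}(i,k)$ by concatenating $n/\chunk$ independent chunks, each uniform on $\btyp{P}{\chunk}$ as in (\ref{eq:cc_codebook}), and pair this with a maximum-mutual-information (MMI) decoder that, at the stopping time $\dectime = M$, decodes the received prefix $\mbf{y}_1^{M\chunk}$ against the truncated codewords $\mbf{x}_1^{M\chunk}(i,k)$. Because the codebook is constant-composition on each chunk, its truncation to any multiple of $\chunk$ is itself a constant-composition code of type $P$, so a single random codebook serves every possible decoding time simultaneously.

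To get (\ref{eq:threshCond}) from the decision rule, whenever $\decide_{M}=1$ we have $R_M < \empMI{P}{\wstd(\csitot{M})}-\stdrlloss$. Since the CSI is $\stdrleps$-cost-consistent, $\csitot{M}\ge\truetot{M}$, and therefore $\wstd(\truetot{M})\subseteq\wstd(\csitot{M})$; the min-mutual-information is monotone nonincreasing in the size of the channel set, giving $\empMI{P}{\wstd(\csitot{M})}\le\empMI{P}{\wstd(\truetot{M})}$. Combining the two inequalities yields (\ref{eq:threshCond}) with $f(\stdrlloss)=\stdrlloss$, and $f(\stdrlloss)\to 0$ as $\stdrlloss\to 0$; continuity of $\lambda\mapsto\empMI{P}{\wstd(\lambda)}$ absorbs any additional $\stdrleps$ slack should I prefer to hide it inside $f$.

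For the error bound at a fixed $M$, the truncated code is a randomized constant-composition code of blocklength $M\chunk$, so the standard randomized-coding error exponent for cost-constrained AVCs under maximal error \cite{HughesT:96exponent,ThomasH:91exponent,CsiszarN:88constraints} gives expected maximal error $O(\exp(-M\chunk\,E(\lambda,R_M)))$ against any state sequence of empirical cost $\lambda=\truetot{M}$, with $E(\lambda,R_M)>0$ whenever $R_M<\empMI{P}{\wstd(\lambda)}$. The $\stdrlloss$-gap in (\ref{eq:std_dec_rule}), together with the consistency inequality above, implies that whenever the decoder fires at $M$ against a state of cost $\lambda$ we have $R_M\le\empMI{P}{\wstd(\lambda)}-\stdrlloss$; taking the infimum of $E(\lambda,R)$ over all such compatible pairs defines $E_3(\stdrlloss)>0$. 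A union bound over the polynomially many stopping times $M\in\{\Mlo,\ldots,\Mhi\}$, over the polynomial-size set of possible CSI values per chunk (each $\normcsi{m}$ takes at most $(\chunk+1)^{|\mc{S}|}$ values), and over empirical state types then produces (\ref{eq:expurg_error}), since $\exp(-M\chunk\,E_3(\stdrlloss))$ dominates the polynomial-in-$n$ union factor whenever $M\ge\Mlo$ and $\chunk=n^{1/4}$.

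The main obstacle is maintaining a \emph{uniform} positive exponent $E_3(\stdrlloss)$: the classical error-exponent expressions degenerate as $R_M$ approaches $\empMI{P}{\wstd(\lambda)}$ from below, so one has to lean on the $\stdrlloss$-gap in (\ref{eq:std_dec_rule}) together with continuity and compactness of the convex-hull family $\wstd(\lambda)$ in $\lambda\in[0,\scostmax]$ to replace the infimum by a minimum and conclude strict positivity. A minor but easily overlooked issue is that the stopping time $\dectime$ depends on the channel outputs and the adversary can exploit this: because our error event is defined chunkwise and is analyzed conditionally on a fixed $M$ before the union bound, randomized coding against a worst-case (in-hindsight) state of cost $\truetot{M}$ covers the adversary's optimal reaction, and the stopping-time dependence is handled for free.
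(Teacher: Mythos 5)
The main gap is in your codebook sampling and the resulting invocation of the Hughes--Thomas exponent. You draw codewords directly from the chunked product set $(\btyp{P}{\chunk})^{n/\chunk}$ and then assert that "the standard randomized-coding error exponent for cost-constrained AVCs under maximal error" gives an exponentially small error against any state of the given cost. But the Hughes--Thomas bound you cite is proved for codewords drawn \emph{uniformly from the full constant-composition class} $\btyp{P}{M\chunk}$, and the random-coding counting arguments behind that exponent depend on that sampling distribution. Uniform sampling from the strict subset $(\btyp{P}{\chunk})^{M}\subsetneq\btyp{P}{M\chunk}$ is a different ensemble, and the probability that a random codeword has a given joint type with a fixed output sequence is not the same, so the exponent does not transfer as a black box. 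This is exactly what the paper's "thinning" step is for: it first samples $A$ codewords from $\btyp{P}{M\chunk}$ (where Hughes--Thomas does apply), then keeps only the fraction $\gamma_{\Mhi}^2$ that are piecewise constant-composition, at the cost of inflating the \emph{average} error by the reciprocal of that fraction; a subsequent random relabeling of message indices (the "permutation" step) converts the average-error bound back into a maximal-error bound. Your argument, as written, silently skips both steps, and as a side effect also omits the accompanying rate loss (the paper's $\rho_M$ versus $R_M$ gap) that has to be absorbed into the threshold. Either you need to re-derive the AVC random-coding exponent for the chunked ensemble from scratch, or you need the thinning-and-permutation bridge.

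Two smaller points. First, the rate-loss argument you give is actually cleaner than the paper's: since $\csitot{M}\ge\truetot{M}$ implies $\wstd(\truetot{M})\subseteq\wstd(\csitot{M})$ and hence $\empMI{P}{\wstd(\csitot{M})}\le\empMI{P}{\wstd(\truetot{M})}$, the threshold rule gives (\ref{eq:threshCond}) with $f(\stdrlloss)=\stdrlloss$ directly via monotonicity, whereas the paper invokes continuity to get $O(\stdrlloss\log\stdrlloss^{-1})$. That part is fine, though if you do include the thinning step you must also fold in the $\stdrlloss/2$ slack it costs, as the paper does. Second, in Algorithm~I the decision rule (\ref{eq:std_dec_rule}) depends only on the CSI $\normcsi{1}^m$ and not on the channel output $\mbf{y}$, so your concluding worry about the stopping time being output-dependent and the adversary exploiting it does not arise here; that concern belongs to Algorithm~II. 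Relatedly, the union bound you sketch over stopping times, CSI values, and state types is not needed for this lemma: the error in (\ref{eq:expurg_error}) is conditional on a fixed triple $(M,\mbf{s},\csi{1}^M)$ and is an expectation over the randomized code, and the derandomization union bound is deferred to Lemma~\ref{lem:elim} in the proof of Theorem~\ref{thm:std_rateless}.
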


\subsection{Proof of Theorem \ref{thm:std_rateless} \label{pf:std_rateless}}

We are now ready to prove the Theorem \ref{thm:std_rateless}.  %

\begin{proof}
Fix $\stdrleps > 0$, $\minrate > 0$ and $P \in \mc{P}(\mc{X})$.  Choose $n$ sufficiently large so that the codebook-valued random variable $\mbf{C}_{\Mhi}$ that is the randomized code from Lemma \ref{lem:good_rateless_rand} satisfies (\ref{eq:expurg_error}) with $\stdrleps = \stdrlloss$ under the conditions on the state and side information in (\ref{eq:std:consistentCSI}) and (\ref{eq:stdrl:dectime}).  For each $M$, let $\mbf{C}_M$ be the the codebook truncated to blocklength $M \chunk$.

We can now draw $K(n)$ codebooks sampled uniformly from $\mbf{C}_{\Mhi}$.  Since $\mbf{C}_{\Mhi}$ truncated to blocklength $M \chunk$ is $\mbf{C}_M$, this sampling induces a sampling on $\mbf{C}_{M}$ for each $M$.  Each of these truncated codebooks has error probability exponentially small in $M \chunk$, so by Lemma \ref{lem:elim} we can choose $n$ sufficiently large and chunk size $c(n)$ so that with probability going to $1$, the error probability is at most $O(n / K(n))$ for each of the truncated codes.  Therefore a code satisfying the conditions of the Theorem exists.
\end{proof}

\subsection{An application to individual sequence channels}

One case in which the we can obtain $\delta$-cost-consistent state information is in the scheme proposed by Eswaran et al.~\cite{EswaranSSG:07indseq} for coding over a channel with individual state sequence.  The codes from this section can be used as a component in that coding scheme, which is an iterated rateless coding strategy using zero-rate feedback and unlimited common randomness.  In each iteration, the encoder and decoder use common randomness to select a rateless code and use randomized training positions to estimate the channel quality.  The rateless code uses the channel estimates to pick a decoding time.  One drawback of the scheme in \cite{EswaranSSG:07indseq} is that the amount of common randomness needed to choose the rateless code is very large.  By using the rateless code constructed in Theorem \ref{thm:std_rateless} the amount of common randomness can be reduced and can be accommodated in the zero-rate feedback link.

\section{Rateless coding for channels with input-dependent state \label{sec:rob_rateless}}
We now prove Theorem \ref{thm:rob_rateless} on rateless coding for AVCs under nosy maximal error.  The idea is to build rateless codes which are \textit{list-decodable} with constant list size at the decoding time $\dectime$.  Lemma \ref{lem:randfromlist} can be used to with these list decodable codes to construct a randomized code with small key size.

\subsection{The coding strategy}

We explicitly use information about the output sequence $\mbf{y}$ at the decoder together with the side information $\csi{m}$.  For $\robrlestgap > 0$ and distribution $P \in \mc{P}(\mc{X})$, given the $m$-th chunk of channel outputs $\cvar{y}{m}$ and the side information set $\csi{m}$, define
	\begin{align*}
	\csi{m}( \cvar{y}{m}, \epsilon) & \\
	&\hspace{-1.8cm} = 
		\left\{ V \in \csi{m} : 
			\maxvar{
				\typ{\cvar{y}{m}}
				}{
				\sum_{x} P(x) V(y|x)
				}
			< \epsilon \right\}~,
	\end{align*}
where $\maxvar{\cdot}{\cdot}$ is the total variational distance.  Although $\csi{m}( \cvar{y}{m}, \epsilon)$ depends on $P$, in our construction $P$ is fixed so we do not make this dependence explicit.

\textbf{Algorithm II : Rateless coding for ``nosy noise''}
\begin{enumerate}
\item The encoder and decoder choose a key $k \in [K]$ using common randomness.  The encoder chooses a message $i \in [N]$ to transmit and maps it into a codeword $\mbf{x}(i,k) \in \mc{X}^{n}$.
\item \label{rob:enctx} If $\tau_{m-1}(\cdot) = 0$,  the encoder transmits $\cvar{x}{m}$ in channel uses $(m-1) \chunk + 1, (m-1) \chunk + 2, \ldots, m \chunk$.
\item The decoder receives channel outputs $\cvar{y}{m}$ and the channel state information set $\csi{m}$ and
calculates the set of possible channels $\csi{m}( \cvar{y}{m}, \robrlestgap)$.  Define the decision function $\decide_m(\mbf{y}_{1}^{m\chunk},\csi{1}^{m},k)$ as
	\begin{align}
		\mbf{1} \left(
			\frac{\log N}{m \chunk} 
			< 
			\frac{1}{m} \sum_{i=1}^{m} \empMI{P}{\csi{m}( \cvar{y}{m}, \robrlestgap)} - \robrlloss
			\right)~.
	\label{eq:rob_thresh}
	\end{align}
If $\decide_m(\cdot) = 1$ then the decoder attempts to decode the received sequence, sets $\hat{i} = \rdec_{m}(\mbf{y}_{1}^{m\chunk},k)$, and feeds back a $1$ to terminate transmission.  Otherwise, it feeds back a $0$ and we return to step \ref{rob:enctx}) for chunk $m+1$.
\end{enumerate}

The rateless code developed in this section has codewords in $(\btyp{P}{\chunk})^{M}$, i.e. they have type $P$ in each chunk.  Once the decision threshold $\dectime$ is reached, the decoder list decodes the received codeword and produces a list of candidate message-key pairs.  From Lemma \ref{lem:randfromlist}, with high probability there will be only one message-key pair in the list consistent with the key used to encode the message.

\subsection{List-decodable codes}

The codebook we use is again sampled from $(\btyp{P}{\chunk})^{n/c}$ given in (\ref{eq:cc_codebook}).  In Lemma \ref{lem:rob_biglist} we show that a codebook consisting of all sequences in $\btyp{P}{\chunk}$ can be used as a list-decodable code with a list size that depends on a channel estimate at the decoder.  This list size is exponential in $\chunk$.  Therefore $(\btyp{P}{\chunk})^{M}$ can also be list-decoded using the channel estimates with list size exponential in $M \chunk$.  The decoding condition (\ref{eq:rob_thresh}) can be used to bound the list size at decoding.  The final step is to sample codewords from $(\btyp{P}{\chunk})^{n/c}$.  The subsampling ensures constant bound on the list size for all decoding times.

\begin{lemma}
\label{lem:rob_biglist}
Let $\mc{W}$ be an AVC.  For any $\robrlestgap > 0$ and $\robrlgap > 0$, $P \in \mc{P}(\mc{X})$ with $\min_x P(x) > 0$ there is a $\chunk$ sufficiently large and a function $E_1(\robrlgap)$ such that for any $\csi{} \in \posscsi{\chunk}$ the set $\btyp{P}{\chunk}$ is a list-decodable code of blocklength $\chunk$ with $N$ messages and list size $L(\csi{})$ for the AVC $\mc{W}$ under nosy maximal error with
	\begin{align}
	N &= |\btyp{P}{\chunk}| \ge \exp \left( \chunk (H(X) - \robrlgap) \right) \nonumber \\
	L(\mbf{y}_{1}^{\chunk},\csi{}) 
		&\le \exp \left( 
			\chunk \left( 
				\max_{V \in \csi{}( \mbf{y}_{1}^{\chunk}, \robrlestgap)} 
				H(X|Y) 
				+ \robrlgap 
				\right)
			\right)~, 
		\label{eq:biglist_size}
	\end{align}
and error 
	\begin{align*}
	\listmaxerr \le \exp( - \chunk \cdot E_1(\robrlgap) )~,
	\end{align*}
where $H(X)$ is calculated with respect to the distribution $P(x)$ and for $V \in \csi{}( \mbf{y}_{1}^{c}, \robrlestgap)$ the conditional entropy $H(X | Y)$ is with respect to the distribution $P(x) V(y | x)$, and $E_1(\robrlgap) > 0$ for $\robrlgap > 0$.
\end{lemma}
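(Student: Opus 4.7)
The plan is to use the method of types with a list-decoder driven by the partial CSI. On observing $(\mbf{y}_1^\chunk, \csi{})$ the decoder forms $\csi{}(\mbf{y}_1^\chunk, \robrlestgap)$ and outputs the list
\[
L \;=\; \bigl\{\mbf{x} \in \btyp{P}{\chunk} : \exists\, V \in \csi{}(\mbf{y}_1^\chunk, \robrlestgap),\ \maxvar{\hat V_{\mbf{x},\mbf{y}}}{V} < \robrlgap/6 \bigr\},
\]
where $\hat V_{\mbf{x},\mbf{y}}(y|x) = T_{\mbf{x},\mbf{y}}(x,y)/P(x)$ is the empirical conditional channel read off the joint type. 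The argument then has three ingredients: a codebook-size bound, a list-size bound, and an error bound.

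For the size bound, I would invoke the standard type-counting inequality $|\btyp{P}{\chunk}| \ge (\chunk+1)^{-|\mc{X}|}\exp(\chunk H(P))$, which exceeds $\exp(\chunk(H(X)-\robrlgap))$ once $\chunk$ is large. For the list-size bound, I would partition $L$ by joint type $T_{\mbf{x},\mbf{y}}$: for each joint type $P\cdot V$ the number of $\mbf{x}\in\btyp{P}{\chunk}$ with that joint type given $\mbf{y}$ is at most $\exp(\chunk H_{P\cdot V}(X|Y))$, and there are at most $(\chunk+1)^{|\mc{X}||\mc{Y}|}$ joint types altogether. Any joint type that contributes to $L$ satisfies $\maxvar{V}{V'}<\robrlgap/6$ for some $V'\in \csi{}(\mbf{y}_1^\chunk, \robrlestgap)$, so continuity of the map $V\mapsto H_{P\cdot V}(X|Y)$ (which is where I would use $\min_x P(x)>0$) lets me replace $H_{P\cdot V}(X|Y)$ by $\max_{V'\in\csi{}(\mbf{y}_1^\chunk, \robrlestgap)} H_{P\cdot V'}(X|Y)+O(\robrlgap)$. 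Absorbing the polynomial prefactor in $\chunk$ into the remaining slack yields (\ref{eq:biglist_size}) for $\chunk$ large.

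For the error, I fix any $\mbf{x}\in\btyp{P}{\chunk}$ and any adversarial state $\mbf{s}=J(\mbf{x})$, let $\hat U(s|x)$ be the conditional empirical type of $\mbf{s}$ given $\mbf{x}$, and set $V_*(y|x)=\sum_s W(y|x,s)\hat U(s|x)$; by the definition of the CSI set in (\ref{eq:rob:avgchan}), $V_*\in\csi{}$. Conditioned on $(\mbf{x},\mbf{s})$ the outputs $y_i$ are independent, so the method of types applied to each $x$-sub-block of length $\chunk P(x)$ gives
\[
\prob\bigl(\maxvar{\hat V_{\mbf{x},\mbf{y}}}{V_*}>\robrlgap/6\bigr) \le \exp(-\chunk\, E(\robrlgap))
\]
for some $E(\robrlgap)>0$. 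On the complementary typical event the output type $\typ{\mbf{y}_1^\chunk}$ is also close to $\sum_x P(x) V_*(y|x)$, so $V_*\in\csi{}(\mbf{y}_1^\chunk,\robrlestgap)$ for $\chunk$ large, and then $\mbf{x}\in L$ by construction. Taking $E_1(\robrlgap)$ to be a constant fraction of $E(\robrlgap)$ absorbs polynomial prefactors and delivers the claimed bound.

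The main obstacle is the careful balancing of the three slack parameters---the tolerance $\robrlgap/6$ in the list rule, the deviation tolerance inside the concentration step, and the gap $\robrlestgap$ in the CSI set---so that on the typical event the true codeword is guaranteed to land in $L$ while the list-size exponent gains only $\robrlgap$ of extra slack. The hypothesis $\min_x P(x)>0$ enters in two essential places: it bounds the Lipschitz constant of $V\mapsto H_{P\cdot V}(X|Y)$ used in the list-size step, and it guarantees that every $x$-sub-block has length proportional to $\chunk$ so that the per-row concentration of $\hat V_{\mbf{x},\mbf{y}}(\cdot|x)$ is exponentially tight.
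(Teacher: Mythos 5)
Your plan is essentially the paper's argument. The paper's decoder outputs $\bigcup_{V \in \csi{}(\mbf{y}_{1}^{c}, \robrlestgap)} \shell{V'}{(|\mc{X}|+1)\robrlgap}{\mbf{y}}$, the union of conditional type shells for reverse channels consistent with the CSI; your list $L$, defined by closeness of the forward empirical channel $\hat V_{\mbf{x},\mbf{y}}$ to some $V$ in $\csi{}(\mbf{y}_1^\chunk, \robrlestgap)$, is the same object up to reparameterization. The size bound by counting conditional type classes is the same, and the error bound --- which the paper simply delegates to the cited thesis --- you sketch directly by concentrating $\hat V_{\mbf{x},\mbf{y}}$ around $V_*(y|x)=\sum_s W(y|x,s)\hat U(s|x)$, which is the right argument (note that it needs a Hoeffding-type bound for independent but non-identically distributed summands rather than Sanov for iid, since the conditional law of $y_i$ for $x_i=x$ varies with $s_i$; the exponent is still $\Theta(\min_x P(x)\,\delta^2)$).

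The one soft spot is your universal tolerance $\robrlgap/6$. On the typical event you need two things that it does not automatically deliver: (i) the empirical output marginal must land within $\robrlestgap$ of $\sum_x P(x)V_*(y|x)$ so that $V_*\in\csi{}(\mbf{y}_1^\chunk,\robrlestgap)$ and hence $\mbf{x}\in L$ --- but a deviation of $\robrlgap/6$ suffices only if $\robrlgap/6 < \robrlestgap$, which the lemma does not guarantee since $\robrlestgap$ and $\robrlgap$ are independent; and (ii) the list-size overhead must be $\le \robrlgap$ after multiplying the decoding tolerance by the modulus of continuity of $V\mapsto H_{P\cdot V}(X|Y)$, which scales like $\delta\log(1/\delta)/\min_x P(x)$, not $\delta$, so a fixed fraction $1/6$ cannot work uniformly in $P$. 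Both are repaired by taking the tolerance to be a quantity shrinking with $\min(\robrlgap,\robrlestgap)$ and with $\min_x P(x)$; you flag this as "the main obstacle" but the proof as written does not resolve it, and it should be made explicit rather than left as a remark.
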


With the previous lemma as a basic building block, we can create nested list-decodable codes where $\chunk$ is chosen to be large enough to satisfy the conditions of Lemma \ref{lem:rob_biglist}.

\begin{lemma}[Concatenated exponential list codes]
	\label{lem:list_chunk}
Let $\mc{W}$ be an AVC.  For any $\robrlestgap > 0$ and $\robrlgap > 0$, $P \in \mc{P}(\mc{X})$ with $\min_x P(x) > 0$, there is a $\chunk$ sufficiently large such that the set $(\btyp{P}{\chunk})^{M}$ is an list-decodable code with blocklength $M \chunk$, $N_M$ messages and list size $L(\mbf{y}_{1}^{M \chunk},\csi{1}^{M})$ for $\csi{1}^{M} = (\csi{1}, \csi{2}, \ldots, \csi{M}) \in \posscsi{\chunk}^{M}$, where
	\begin{align*}
	N_M &\ge \exp \left( M \chunk (H(X) - \robrlgap) \right)
	\end{align*}
and
	\begin{align}
	L(\mbf{y}_1^c,\csi{1}^{M}) & \nonumber \\
	&\hspace{-1.5cm}
	\le \exp \left( \chunk \left( \sum_{m=1}^{M} \max_{V \in \csi{m}( \cvar{y}{m}, \robrlestgap)} H(X_m|Y_m) + M \robrlgap \right)\right)~,
	\label{eq:concatListSize}
	\end{align}
and maximal probability of error 
	\begin{align}
	\listmaxerr \le M \exp( - \chunk E_2(\robrlgap) )~,
	\label{eq:concatErr}
	\end{align}
where $H(X)$ is calculated with respect to the distribution $P(x)$ and for a channel $V \in \csi{m}( \cvar{y}{m}, \robrlestgap)$ the conditional entropy $H(X | Y)$ is with respect to the distribution $P(x) V(y | x)$, and $E_2(\robrlgap) > 0$.
\end{lemma}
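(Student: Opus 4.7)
The plan is to decode each chunk independently using Lemma \ref{lem:rob_biglist} and return the Cartesian product of the per-chunk lists as the list for the concatenated code. Formally, given received sequence $\mbf{y}_1^{M\chunk}$ and side information $\csi{1}^M$, the decoder forms for each $m$ a chunk list $L_m = L_m(\cvar{y}{m},\csi{m}) \subseteq \btyp{P}{\chunk}$ using the single-chunk rule guaranteed by Lemma \ref{lem:rob_biglist}, and returns $\mc{L} = L_1 \times L_2 \times \cdots \times L_M \subseteq (\btyp{P}{\chunk})^M$.

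For the codebook size, pick $\chunk$ large enough that Lemma \ref{lem:rob_biglist} applies; then $|(\btyp{P}{\chunk})^M| = |\btyp{P}{\chunk}|^M \ge \exp(M\chunk(H(X)-\robrlgap))$, giving the claimed bound on $N_M$. The list-size bound (\ref{eq:concatListSize}) follows immediately by multiplying the per-chunk bound (\ref{eq:biglist_size}) over $m$:
    \begin{align*}
    |\mc{L}| = \prod_{m=1}^{M} |L_m|
    \le \exp\!\left(\chunk \sum_{m=1}^{M}\!\left( \max_{V \in \csi{m}(\cvar{y}{m},\robrlestgap)}\!\! H(X|Y) + \robrlgap\right)\!\right).
    \end{align*}

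The main step is the error bound (\ref{eq:concatErr}). A codeword $\mbf{x} \in (\btyp{P}{\chunk})^M$ fails to appear in $\mc{L}$ exactly when $\cvar{x}{m} \notin L_m$ for some $m$, so by the union bound it suffices to show that each per-chunk miss probability is at most $\exp(-\chunk E_1(\robrlgap))$ uniformly over nosy-noise jammer strategies for the full concatenated code. Let $J: \mc{X}^{M\chunk} \to \mc{S}^{M\chunk}$ be any such strategy, and fix a target codeword $\mbf{x} = (\cvar{x}{1},\ldots,\cvar{x}{M})$. Freezing all chunks other than the $m$-th at their true values induces a mapping $\cvar{x}{m} \mapsto J(\mbf{x})_{(m-1)\chunk+1}^{m\chunk}$ from $\mc{X}^\chunk$ into $\mc{S}^\chunk$, which is a valid single-chunk nosy-noise strategy for $\mc{W}$. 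Lemma \ref{lem:rob_biglist} therefore bounds the chunk-$m$ miss probability by $\exp(-\chunk E_1(\robrlgap))$, and summing over $m$ yields (\ref{eq:concatErr}) with $E_2(\robrlgap) = E_1(\robrlgap)$.

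The main obstacle is conceptual rather than computational: one must verify that decoding chunks in isolation does not forfeit anything, even though the jammer's strategy space for the concatenated channel is strictly richer than for a single chunk (he sees all $M\chunk$ input symbols before committing to the full state sequence). The key observation is that after conditioning on the target codeword, the adversary's effect on any individual chunk still lies within the single-chunk nosy strategy class to which Lemma \ref{lem:rob_biglist} applies, so the per-chunk guarantee carries over and the $M$-fold union bound absorbs only a polynomial factor that is dominated by the exponential $\exp(-\chunk E_1)$.
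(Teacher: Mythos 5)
Your proposal is correct and takes essentially the same approach as the paper: list-decode each chunk separately via Lemma \ref{lem:rob_biglist}, take the Cartesian product of per-chunk lists, bound the list size multiplicatively, and union-bound the per-chunk miss probabilities over the $M$ chunks to get $E_2 = E_1$. Your extra paragraph explaining why the jammer's richer strategy space over the full block does not help (because, conditioned on a fixed target codeword, the induced per-chunk state is still a valid single-chunk nosy strategy) is a correct and slightly more explicit justification of a step the paper leaves implicit.
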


Our codebook is constructed by sampling codewords from the codebook $(\btyp{P}{\chunk})^{n/c} = (\btyp{P}{\chunk})^{\Mhi}$.  Truncating this set to blocklength $M \chunk$ gives $(\btyp{P}{\chunk})^{M}$.  We want to show that for each $M$ the sampled codewords can be used in a list decodable code with constant list size $L$.  We can define for each truncation $M$, output sequence $\mbf{y}_{1}^{M \chunk}$, and side information sequence $(\mc{V}_1, \ldots, \mc{V}_M)$ a ``decoding bin''
	\begin{align*}
	B(M, \mbf{y}_{1}^{M\chunk}, \csi{1}^{M}) \subset \mc{X}^{M \chunk}~,
	\end{align*}
which is the list given by the code in Lemma \ref{lem:list_chunk}.  The size of each bin can be upper bounded by (\ref{eq:concatListSize}):
	\begin{align*}
	|B(M, \mbf{y}_{1}^{M\chunk}, \csi{1}^{M})| 
	& \\
	&\hspace{-2cm}
	\le 
	\exp\left( \chunk  \left(
		\sum_{m=1}^{M} 
			\max_{V \in \csi{m}( \cvar{y}{m}, \robrlestgap)} H(X_m|Y_m) 
		+ M \robrlgap \right)
		\right)~.
	\end{align*}

\begin{lemma}[Constant list size]
\label{lem:cc_const}
Let $\mc{W}$ be an AVC with cost function $\scostf(\cdot)$.  For any $\robrlloss > 0$, $P \in \mc{P}(\mc{X})$ with $\min_x P(x) > 0$, for sufficiently large blocklength $n$ there exists a set of $N(n) = \exp(n\minrate)$ codewords $\{\mbf{x}(j) : j \in [N] \} \subset (\btyp{P}{\chunk})^{n/c}$ such that for any CSI sequence $(\csi{1}, \csi{2}, \ldots, \csi{\Mhi})$ and channel output $\mbf{y}$ with decoding time $\dectime$ given by (\ref{eq:rob_thresh}), the truncated codebook $\{\mbf{x}_{1}^{M \chunk}(j) : j \in [N] \}$ is an list decodable code with list size $L$ satisfying
	\begin{align*}
	L \ge \frac{ 12 \log|\mc{Y}| }{ \robrlloss }~,
	\end{align*}
and maximal probability of decoding error
	\begin{align*}
	\listmaxerr(M) \le M \exp ( - \chunk E(\robrlloss) )~,
	\end{align*}
where $E(\robrlloss) > 0$.
\end{lemma}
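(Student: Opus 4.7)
The strategy is to obtain the subcode by random subsampling from the concatenated list-decodable code $(\btyp{P}{\chunk})^{n/c}$ supplied by Lemma \ref{lem:list_chunk}. I fix $\robrlgap < \robrlloss/4$ and the accuracy parameter $\robrlestgap$ matching the decoding rule (\ref{eq:rob_thresh}), and invoke Lemma \ref{lem:list_chunk} to obtain, for every $M$, $\mbf{y}_1^{M\chunk}$ and $\csi{1}^M$, a decoding bin $B(M, \mbf{y}_1^{M\chunk}, \csi{1}^M)$ of size (\ref{eq:concatListSize}) and list-decoding error (\ref{eq:concatErr}), together with the codebook size lower bound $|\btyp{P}{\chunk}|^M \ge \exp(M\chunk(H(X) - \robrlgap))$. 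I then draw $N(n) = \exp(n \minrate)$ codewords i.i.d.~uniformly from $(\btyp{P}{\chunk})^{\Mhi}$; by the product structure, the truncations of these codewords to any blocklength $M\chunk$ are i.i.d.~uniform on $(\btyp{P}{\chunk})^M$, so the number intersecting any fixed bin is binomial with parameters $N$ and $p = |B|/|\btyp{P}{\chunk}|^M$.

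The central quantitative estimate is that whenever the decoding rule (\ref{eq:rob_thresh}) holds at time $M$,
\begin{align*}
\chunk \sum_{m=1}^{M} \min_{V \in \csi{m}(\cvar{y}{m}, \robrlestgap)} I(P,V) \ge \log N + M\chunk \robrlloss,
\end{align*}
which via the identity $\max H(X|Y) = H(X) - \min I(X;Y)$ applied chunk by chunk in (\ref{eq:concatListSize}) yields
\begin{align*}
p = \frac{|B(M, \mbf{y}_1^{M\chunk}, \csi{1}^M)|}{|\btyp{P}{\chunk}|^M} \le \exp\bigl(-\log N - M\chunk(\robrlloss - 2\robrlgap)\bigr).
\end{align*}
Consequently $\mu_M := Np \le \exp(-M\chunk(\robrlloss - 2\robrlgap))$, which is exponentially small in $n$ because $M\chunk \ge \Mlo \chunk = \Theta(n)$ under (\ref{eq:chunkassump})--(\ref{eq:mhiassump}).

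Setting $L = \lceil 12 \log|\mc{Y}|/\robrlloss \rceil$, the standard tail bound gives $\prob[|B \cap \mathrm{subcode}| \ge L] \le \binom{N}{L} p^L \le \mu_M^L$. I then union-bound over $M \in \{\Mlo,\ldots,\Mhi\}$, over the polynomially-many joint types of $(\mbf{x},\mbf{y})$ (which together with $\csi{1}^M$ determine the bin as a union of conditional type classes), and over $\csi{1}^M \in \posscsi{\chunk}^M$. By (\ref{def:csiexp}) and $\Mhi = n^{3/4}$, the CSI factor is at most $\chunk^{\csiexp \Mhi} = \exp((\csiexp/4) n^{3/4} \log n)$, which is sub-exponential in $n$. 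The total union-bound probability is then $\exp(-\Theta(n) + O(n^{3/4} \log n)) \to 0$, so a realization of the subcode satisfying the list-size bound simultaneously for every admissible triple exists. The list-decoding error at time $M$ is inherited from Lemma \ref{lem:list_chunk}, giving $\listmaxerr(M) \le M \exp(-\chunk E_2(\robrlgap))$, and I set $E(\robrlloss) := E_2(\robrlgap)$.

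The main obstacle is the union bound over the CSI-sequence space, whose size $|\posscsi{\chunk}|^{\Mhi}$ is super-polynomial in $n$. Keeping this factor sub-exponential, so that it is overcome by the $\exp(-L M\chunk(\robrlloss - 2\robrlgap))$ decay, requires both the polynomial-size assumption (\ref{def:csiexp}) and the sub-linear chunk count $\Mhi = n^{3/4}$ of (\ref{eq:mhiassump}); this is the design reason for the chunk size (\ref{eq:chunkassump}) rather than a constant. Once this balance is in place, a constant list size $L$ depending only on $\robrlloss$ and $|\mc{Y}|$ suffices, which is exactly the content of the lemma.
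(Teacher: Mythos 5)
Your argument follows the same overall blueprint as the paper's proof (random subsampling from $(\btyp{P}{\chunk})^{\Mhi}$, bounding the probability that a sampled codeword falls in a decoding bin of the concatenated list code, then a union bound), and your use of the direct binomial tail $\binom{N}{L}p^L \le (Np)^L$ is cleaner than the paper's Sanov-based estimate — it avoids the $(N+1)^2 = \exp(2n\minrate)$ prefactor the paper has to absorb into $L$. The chunk-by-chunk computation of $p$ via $\max_V H(X|Y) = H(X) - \min_V I(P,V)$ and the decoding rule (\ref{eq:rob_thresh}) is correct and matches the paper's bound on $G$.

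The gap is in the union bound. You union over ``the polynomially-many joint types of $(\mbf{x},\mbf{y})$'' together with $\csi{1}^M$, asserting that this determines the bin. It does not: the bin $B(M,\mbf{y}_1^{M\chunk},\csi{1}^M)$ is a subset of $\mc{X}^{M\chunk}$ built as a union of conditional type shells \emph{around the specific sequence} $\mbf{y}_1^{M\chunk}$, so two output sequences of identical type yield \emph{different} bins, and ``at most $L$ sampled codewords in the bin'' are genuinely different events for different $\mbf{y}$ even at fixed type. The union must therefore run over all $|\mc{Y}|^{M\chunk}$ output sequences (as the paper does), which is $\exp(\Theta(M\chunk)) = \exp(\Theta(n))$, not sub-exponential. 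Consequently your claim that the total union-bound cost is $\exp(-\Theta(n)+O(n^{3/4}\log n))$ and that ``once this balance is in place, a constant list size $L$ suffices'' is wrong as stated: it is precisely the exponential $|\mc{Y}|^{M\chunk}$ factor that forces $L$ to scale as $\log|\mc{Y}|/(\robrlloss - 2\robrlgap)$, and if the union-bound cost were only sub-exponential then $L=1$ would already work, which is false. Your chosen $L = \lceil 12\log|\mc{Y}|/\robrlloss\rceil$ happens to be large enough to cover the correct exponential union bound, so the final conclusion survives, but the reasoning for why $L$ must be of this size is the wrong one. Replacing ``joint types'' by ``output sequences $\mbf{y}_1^{M\chunk}$'' and tracking the exponent $M\chunk\log|\mc{Y}|$ against $L M\chunk(\robrlloss - 2\robrlgap)$ repairs the argument.
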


\subsection{Proof of Theorem \ref{thm:rob_rateless} \label{pf:rob_rateless}}

\begin{proof}
We will use the codebook from Lemma \ref{lem:cc_const}.  Since the set of messages of fixed size $N$, we use the construction of Lemma \ref{lem:randfromlist}.  This makes the code, when decoded at after $\dectime = M$ chunks, an $(M\chunk,  \exp(n \minrate)/\sqrt{K(n)}, K(n))$ randomized code with probability of error
	\begin{align*}
	\robrlerr(M,\mbf{s}) \le 
		M \exp ( - \chunk E(\robrlloss) ) 
		+ \frac{2 L n \minrate}{\sqrt{K} \log K}~.
	\end{align*} 
Then we can use choose $L = 12 ( \log |\mc{Y}|)/\robrlloss$ to get
	\begin{align*}
	\robrlerr(M,\mbf{s}) \le 
		M \exp ( - \chunk E(\robrlloss) ) 
		+ \frac{24 n \minrate \log |\mc{Y}| }{
			\robrlloss \sqrt{K} \log K}~.
	\end{align*}

Finally, we must show that loss in rate is small, assuming $\robrlloss$-consistent state information.  But this follows because by (\ref{eq:consistent}), for all $m$
	\begin{align*}
	\empMI{P}{V_m} - \empMI{P}{\csi{m}} \le \robrlloss~.
	\end{align*}
Therefore the average of mutual informations in (\ref{eq:rob_thresh}) is at most $\robrlloss$ smaller than the averages with the true channels and hence we get the bound on the decoding time.
\end{proof}

\section{Discussion}

In this paper we constructed rateless codes for two different channel models with time varying state based on arbitrarily varying channels.  In the first model, the state cannot depend on the transmitted codeword, and in the second model it can.  By adapting previously proposed derandomization strategies, we showed that a sublinear amount of common randomness is sufficient.  The first approach \cite{Ahlswede:78elimination} subsamples a randomized code and the second \cite{Langberg:04focs} is based on list decoding.  The latter strategy may interesting from a practical standpoint given recent attention to list decoding with soft information \cite{KoetterV:03soft}.  The common randomness  needed for our codes can be established via a zero-rate feedback link, which means that a secure control channel of small rate is all that is required to enable reliable communication in these situations.  In particular, we can partially derandomize the construction proposed in \cite{EswaranSSG:07indseq} for communicating over channels with individual state sequences.

We also found the capacity $\hat{C}_r(\scostb)$ for AVCs with ``nosy noise.''  For these channels the jammer has knowledge of the transmitted codeword and we showed the randomized coding capacity $\hat{C}_r(\scostb)$ is equal to $\capdep(\scostb)$.  Although in some examples $\hat{C}_r(\scostb)$ may equal the capacity under maximal error $C_r(\scostb)$, in general it is smaller.  It is interesting to note that the jammer's worst strategy for nosy noise is to make a ``memoryless attack'' on the input by choosing the state $s_t$ according the the minimizing conditional distribution $U(s | x_t)$ in (\ref{eq:capdep}).  In constrast, if the jammer is given strictly causal knowledge of the input sequence, Blackwell et al.~\cite{BlackwellBT:60random} showed that the capacity is given by $\capstd$, which is also the capacity when the jammer has no knowledge of the input sequence.  Thus from the jammer's perspective, causal information about $\mbf{x}$ is as good as no knowledge, and full knowledge is as good as knowledge of the current input.

One interesting model for these point-to-point channels that we did not address is the case where the jammer has noisy access to the transmitted codeword.  This can happen, for example, when the jammer is eavesdropping on a wireless multihop channel.  Our derandomization strategies are tailored to the extreme ends of our channel model, where the jammer has no knowledge or full knowledge.  A unified coding scheme that achieves capacity for a range of assumptions on the jammer's knowledge may help unify the two approaches.  Finally, although the results in this paper are for finite alphabets, extensions to continuous alphabets and the Gaussian AVC setting \cite{HughesN:87gavc,HughesN:88vgavc,CsiszarN:91gavc} should be possible using appropriate  approximation techniques.  An interesting rateless code using lattice constructions has been proposed by Erez et al. in \cite{ErezTW:07rateless}, and it would be interesting to see if that approach can work for more robust channel models.

\section*{Acknowledgments}

We thank the anonymous reviewers for their extensive comments and suggestions which greatly clarified and simplified the manuscript.  We would also like to thank the Associate Editor, I.~Kontoyiannis, for his many efforts and constructive comments.  Finally, we also benefited from fruitful discussions with K.~Eswaran and A.~Sahai.

\appendix

Technical proofs have been deferred to this appendix.   %

\subsection{Proof of Lemma \ref{lem:good_rateless_rand}}

\begin{proof}
Fix $\stdrlloss > 0$, $\minrate > 0$ and $P$.  We will prove that for each $M \in \mc{M} = \{\Mlo, \ldots, \Mhi\}$ there exists a randomized codebook $\mbf{C}_M$ of blocklength $M \chunk$ with rate 
	\begin{align*}
	R_M = \frac{n\minrate}{M \chunk}~.
	\end{align*}
Let $\codenorm{M}$ be defined by
	\begin{align}
	R_M = \empMI{P}{\wstd(\codenorm{M})} - \stdrlloss~.
	\label{eq:LambdaM}
	\end{align}
The distribution of the codebook $\mbf{C}_M$ will be the same as the distribution of the codebook $\mbf{C}_{\Mhi}$ of blocklength $\Mhi \chunk$ truncated to blocklength $\chunk$.

\textbf{Standard randomized codebook.}  Fix $M$ and let $\mbf{A}_M$ be a randomized codebook of $A$ codewords drawn uniformly from the constant-composition set $\btyp{P}{M \chunk}$ with maximum mutual information (MMI) decoding.  Choose $A$ such that
	\begin{align*}
	\frac{1}{M \chunk} \log A < \empMI{P}{\wstd(\codenorm{M})} - \stdrlloss/2~.
	\end{align*}
From Hughes and Thomas \cite[Theorem 1]{HughesT:96exponent} the following exponential error bound holds for all messages $i$ and state sequences $\mbf{s} \in \mc{S}^{M \chunk}$ with $l(\mbf{s}) \le (M \chunk) \codenorm{M}$:
	\begin{align}
	\chunkerr{M}(\mbf{A}_M,i,\mbf{s}) & \nonumber \\
		&\hspace{-1.8cm} \le \exp \left( -M \chunk \left( 
		 E_r\left( \frac{1}{M \chunk} \log A + \stdrlloss/2, P,\codenorm{M} \right) - \stdrlloss/2 
		\right) \right) \label{eq:HTerrbnd} \\
	&\hspace{-1.8cm}
	 \stackrel{\Delta}{=} \zeta_{M}~. \nonumber 
	\end{align}
The exponent $E_r(\cdot)$ is positive as long as the first argument is smaller than $\empMI{P}{\wstd(\codenorm{M})}$.
Therefore we have the same bound on the average error:
	\begin{align*}
	\frac{1}{A} \sum_{i=1}^{A} \chunkerr{M}(\mbf{A}_M,i) 
		\le \zeta_{M}~.
	\end{align*}

\textbf{Thinning.}  Let $\mbf{B}_M$ be a random codebook formed selecting $B$ codewords from $\mbf{A}_M \cap (\btyp{P}{\chunk})^{M}$.  That is, we keep $B$ codewords which are piecewise constant-composition with composition $P$.  We declare an encoding error if $|\mbf{A}_M \cap (\btyp{P}{\chunk})^{M}| < B$.  We use a combinatorial bound from \cite{EswaranSSG:07indseq}:
	\begin{align}
		\frac{|\btyp{P}{\chunk}|^{M}}{|\btyp{P}{M \chunk}|}
		&\ge 
		\exp( - M \log(M\chunk) \eta(P) ) \nonumber \\
		&\stackrel{\Delta}{=} \gamma_M~,
		\label{eq:stdcbook:thinfactor}
	\end{align}
where $\eta(P) < \infty$ is a positive constant. Since $\mbf{A}_M$ is formed by iid draws from $\btyp{P}{M \chunk}$, the event that codeword $i$ from $\mbf{A}_M$ is also in $(\btyp{P}{\chunk})^{M}$ is distributed according to a Bernoulli random variable with parameter at least $\gamma_M$.  The size of $|\mbf{A}_M \cap (\btyp{P}{\chunk})^{M}|$ is therefore the sum of iid Bernoulli random variables and the chance of encoding error can be bounded using Sanov's Theorem \cite{CoverThomas}:
	\begin{align*}
		\prob\left( |\mbf{A}_M \cap (\btyp{P}{\chunk})^{M}| < B \right) 
		& \\
		&\hspace{-2cm}
			\le (A+1)^{2} \exp \left(- A \cdot \kldiv{B/A}{\gamma_{M}} \right)~.
	\end{align*}
Choose $B = \gamma_{\Mhi}^2 A$.  Then we can make the probability that $|\mbf{A}_M \cap (\btyp{P}{\chunk})^{M}| < B$ as small as we like and much smaller than the decoding error bound.  Furthermore, this bound holds for all $M \in \mc{M}$.  Therefore a sub-codebook of $B$ piecewise constant-composition codewords exists with high probability.

The encoder using $\mbf{B}_M$ now operates as follows : it draws a realization of a codebook and declares an error if the realization contains fewer than $B$ codewords.  If there is no encoding error it transmits the $i$-th codeword in the codebook for message $i \in [B]$.  The average error on the fraction $B/A = \gamma_{\Mhi}^2$ of preserved codewords can be at most $A/B$ times the original average error:
	\begin{align*}
	\frac{1}{B} \sum_{i=1}^{B} \chunkerr{M}(\mbf{B}_M,i) 
		&\le \frac{\zeta_{M}}{\gamma_{\Mhi}^2}~.
	\end{align*}

\textbf{Permutation.}  We now form our random codebook $\mbf{C}_M$ by taking the codebook induced by encoder using $\mbf{B}_M$ and permuting the message index.  The encoder using $\mbf{C}_M$ takes a message $i$, randomly chosen permutation $\pi$ on $[B]$, and a codebook $\mc{B}$ from $\mbf{B}_M$ and outputs the codeword $\pi(i)$ from $\mc{B}$.  The maximal error for a message $i$ in this codebook the same as average error of $\mbf{B}_M$:
	\begin{align*}
	\chunkerr{M}(\mbf{C}_M,i) &= 
		\frac{1}{B!} \sum_{\pi} \chunkerr{M}(\mbf{B}_M, \pi(i))  \\
		&= \frac{1}{B} \sum_{i=1}^{B} \chunkerr{M}(\mbf{B}_M, i)  \\
		&\le \frac{\zeta_{M}}{\gamma_{\Mhi}^2}~.
	\end{align*}
For each $M$ we can construct a randomized codebook $\mbf{C}_M$ as described above.

\textbf{Nesting.}  Now consider the codebook $\mbf{C}_{\Mhi}$ of blocklength $n = \Mhi \chunk$ and set the size of the codebook to $B$ to equal $N(n) = \exp(n \minrate)$.  We must guarantee that the errors will still be small.  Since $B = \gamma_{\Mhi}^2 A$, the rate of the codebook $\mbf{A}_{\Mhi}$ is
	\begin{align*}
	\rho_{\Mhi} &= \frac{1}{\Mhi \chunk} \log \frac{N}{\gamma_{\Mhi}^2}~.
	\end{align*}
If we truncate $\mbf{C}_{\Mhi}$ to blocklength $M \chunk$, the resulting randomized code is identically distributed to $\mbf{C}_M$.  The rate for the corresponding $\mbf{A}_{M}$ can be bounded using (\ref{eq:stdcbook:thinfactor}), (\ref{eq:chunkassump}) and (\ref{eq:mhiassump}):
	\begin{align*}
	\rho_M &= \frac{1}{M \chunk} \log \frac{N}{\gamma_{\Mhi}^2} \\
	 &\le R_M + 2 \eta(P) \frac{\Mhi}{M} \frac{\log(\Mhi \chunk)}{\chunk} \\
	 &\le R_M + 2 \eta(P) \frac{R_{\max}}{R_{\min}} \cdot \frac{\log n}{n^{1/4}}~.
	\end{align*}
Therefore we can choose $n$ sufficiently large so that the gap between $\rho_M$ and $R_M$ can be made smaller than $\stdrlloss/2$, so $\rho_M < R_M + \stdrlloss/2$.  Therefore using the definition of $\codenorm{M}$ in (\ref{eq:LambdaM}) and the fact that $\codenorm{M} \ge \csitot{M}$ we have
	\begin{align*}
	\rho_M < \empMI{P}{\wstd(\codenorm{M})} - \stdrlloss/2~,
	\end{align*}
and the exponent in (\ref{eq:HTerrbnd}) is positive.  Now, for $(\mbf{s},\{\normcsi{m}\})$ such that (\ref{eq:threshCond}) holds, the error can be bounded:
	\begin{align*}
	\stdmaxerr(M,\mbf{s},\{\normcsi{m}\}) & \\
	&\hspace{-1cm} \le \frac{\zeta_{M}}{\gamma_{\Mhi}^2} \\
	&\hspace{-1cm} \le \exp \left( -M \chunk \left( 
		 E_r\left( R_M - \stdrlloss/2, P,\codenorm{M} \right) - \stdrlloss/2 
		\right) \right) \\
		&
		 \exp\left( 2 \eta(P) \Mhi \log(\Mhi \chunk) \right) \\
	&\hspace{-1cm} = O( \exp(- E_3(\stdrlloss) M \chunk) )~.
	\end{align*}

\textbf{Rate loss.}  The last step is to compare $\empMI{P}{\wstd(\codenorm{M})}$ to the empirical mutual information induced by the true state sequence.  By assumption, the partial CSI is $\stdrlloss$-cost-consistent, so by (\ref{eq:consistent}),
	\begin{align*}
	\truetot{M} \le \csitot{M} \le \truetot{M} + \stdrlloss~.
	\end{align*}
Therefore 
	\begin{align*}
	\empMI{P}{\wstd\left(\truetot{M}\right)} 
		- \empMI{P}{\wstd\left(\csitot{M}\right)} 
	= O(\stdrlloss \log \stdrlloss^{-1})~.
	\end{align*}
By the triangle inequality and (\ref{eq:std_dec_rule}),
	\begin{align*}
	\empMI{P}{\wstd\left(\truetot{M}\right)}
		- \frac{\log N}{m \chunk}
		= O(\stdrlloss \log \stdrlloss^{-1})~.
	\end{align*}
This proves (\ref{eq:threshCond}).
\end{proof}	
	
\subsection{Proof of Lemma \ref{lem:rob_biglist}}

\begin{proof}
Fix $\robrlgap > 0$ and $\robrlestgap > 0$.  For an input distribution $P(x)$ and channel $V(y|x)$, let $P'(y)$ be the marginal distribution on $\mc{Y}$ and $V'(x|y)$ be the channel such that $P(x) V(y|x) = P'(y) V'(x|y)$.  Our decoder will output the set
	\begin{align*}
	\mc{L}(\mbf{y}_{1}^{c},\csi{}) = \bigcup_{V \in \csi{}( \mbf{y}_{1}^{c}, \robrlestgap)} \shell{V'}{(|\mc{X}| + 1) \robrlgap}{\mbf{y}}~.
	\end{align*}
The size of this set is, by a union bound, upper bounded by (\ref{eq:biglist_size}).  The list coding results in \cite{Sarwate:08thesis} show that the probability that either the transmitted codeword $\mbf{x} \notin \mc{L}(\mbf{y}_{1}^{c},\csi{})$ or
	\begin{align*}
	\mbf{y}_{1}^{c} \notin \bigcup_{V \in \csi{}( \mbf{y}_{1}^{c}, \robrlestgap)}
		\shell{V}{\robrlgap}{\mbf{x}}
	\end{align*}
is upper bounded by
	\begin{align*}
	\listmaxerr(\csi{}) \le \exp( - \chunk \cdot E_L(\robrlgap) )~
	\end{align*}
for some positive function $E_L(\robrlgap)$.

For $\chunk$ sufficiently large, the size of this list can be bounded by (\ref{eq:biglist_size}), and the error probability is still bounded by 
	\begin{align*}
	\listmaxerr(\csi{}) \le \exp( - \chunk \cdot E_L(\robrlgap) )~.
	\end{align*} 
Thus, with probability exponential in $\chunk$, this set will contain the transmitted $\mbf{x} \in \etyp{P}{\chunk}$.  Taking a union bound over the $|\posscsi{\chunk}| = \chunk^{\csiexp}$ possible values of the side information $\csi{}$ shows that
	\begin{align*}
	\listmaxerr \le \exp( - \chunk \cdot E_L(\robrlgap) - \csiexp \log \chunk)~,
	\end{align*}
which gives the exponent $E_1(\robrlgap)$.
\end{proof}
	
\subsection{Proof of Lemma \ref{lem:list_chunk}}

\begin{proof}
Choose $\chunk$ large enough to satisfy the conditions of Lemma \ref{lem:rob_biglist}.  Our decoder will operate by list-decoding each chunk separately.  Let $L_m$ be the list size guaranteed by Lemma \ref{lem:rob_biglist} for the $m$-th chunk.  Then the corresponding upper bound in $\prod_{m=1}^{M} L_m$ is the desired the upper bound on $L(\csi{1}^{M})$.  The probability of the list in each chunk not containing the corresponding transmitted chunk can be upper bounded:
	\begin{align*}
	\listmaxerr \le M \exp( -\chunk E_1(\robrlgap) )~.
	\end{align*}
As long as $\chunk$ grows faster than $\log M$ the decoding error will still decay exponentially with the chunk size $\chunk$.
\end{proof}

\subsection{Proof of Lemma \ref{lem:cc_const}}

\begin{proof}
Fix $\robrlloss > 0$.  We begin with the codebook $(\btyp{P}{\chunk})^{\Mhi}$.  The truncation of this codebook to blocklength $M \chunk$ for $M \in \mc{M}$ is the codebook defined in Lemma \ref{lem:list_chunk}.  Let $\{\mbf{Z}_j : j \in [N] \}$ be $N = \exp(n \minrate)$ random variables distributed uniformly on the set $(\btyp{P}{\chunk})^{\Mhi}$.  The decoder will operate in two steps: first it will decode the received sequence into the exponential size list $B(M, \mbf{y}_{1}^{M\chunk}, \csi{1}^{M})$ given by the decoder of Lemma \ref{lem:list_chunk}, and then it will output only those codewords in the list which match one of the sampled codewords $\{\mbf{Z}_j\}$.  Note that the decoder for Lemma \ref{lem:list_chunk} has error satisfying (\ref{eq:concatErr}).

For any $\robrlestgap > 0$ and $\robrlgap > 0$ we can choose $\chunk(n)$ sufficiently large so that for any fixed $M$, $\mbf{y}_{1}^{M\chunk}$, and $\csi{1}^{M} \in \posscsi{\chunk}^M$ that satisfy the conditions of the decoding rule in (\ref{eq:rob_thresh}) the probability that $\mbf{Z}_j$ lands in the list $B(M, \mbf{y}_{1}^{M\chunk}, \csi{1}^{M})$ output by the decoder of Lemma \ref{lem:list_chunk} is upper bounded:
	\begin{align*}
	\prob( \mbf{Z}_j \in B(M, \mbf{y}_{1}^{M\chunk}, \csi{1}^{M}) )  
		& \\
		&\hspace{-2cm}
		\le \frac{
			|B(M, \mbf{y}_{1}^{M\chunk}, \csi{1}^{M})|
			}{
			\exp\left( M \chunk \left( H(X) - \robrlgap \right) \right)
			} \\
		&\hspace{-2cm}
		\le \exp \left( - \chunk \sum_{m=1}^{M} 
			\empMI{P}{\csi{m}(\cvar{y}{m}, \robrlestgap)}
			+ 2 M \chunk \robrlgap
			\right) \\
		&\hspace{-2cm}
		\triangleq G~.
	\end{align*}	

The random variable $\mbf{1}(\mbf{Z}_j \in B(M, \mbf{y}_{1}^{M\chunk}, \csi{1}^{M}) )$ is Bernoulli with parameter smaller than $G$, so we can bound the probability that $L$ of the $N$ codewords $\{\mbf{Z}_j\}$ land in the set $B(M, \mbf{y}_{1}^{M\chunk}, \csi{1}^{M})$ using Sanov's Theorem \cite{CoverThomas}:
	\begin{align*}
	\prob\left( 
	\frac{1}{N} \sum_{j = 1}^{N} \mbf{1}(\mbf{Z}_j \in B(M, \mbf{y}_{1}^{M\chunk}, \csi{1}^{M}) ) > L/N
	\right)
	& \\
	&\hspace{-4.5cm} \le (N+1)^2 \exp \left(-N    D\left(  L/N \ \|\  G \right) \right)~.
	\end{align*}
The exponent can be written as
	\begin{align*}
   L \log\left( \frac{L/N}{G} \right) + N (1 - L/N) \log\left( \frac{1 - L/N}{1 - G} \right)~.
	\end{align*}
To deal with the $(1 - L/N) \log((1 - L/N)/(1 - G))$ term we use the inequality $- (1 - a) \log(1 - a) \le 2a$ (for small $a$) on the term $(1 - L/N) \log(1 - L/N)$
and discard the small positive term $- (1 - L/N) \log(1 - G)$:
	\begin{align}
	N D\left(  L/N \ \|\  G \right) \nonumber
	&\ge 
	L \log\left( \frac{L/N}{G} \right) - N 2(L/N) \nonumber \\
	&\hspace{-2cm} = L \log\left( \frac{L/N}{G} \right) - 2 L \nonumber \\
	&\hspace{-2cm} = L \left( - n \minrate + \chunk \sum_{m=1}^{M} 
			\empMI{P}{\csi{m}(\cvar{y}{m}, \robrlestgap)}
			- 2 M \chunk \robrlgap
		\right) \nonumber \\
	&\hspace{-1cm}
		+ L \log L - 2L~. \label{eq:rob:listsubexp}
	\end{align}
From the rule in (\ref{eq:rob_thresh}), we know that $(M, \mbf{y}_{1}^{M\chunk}, \csi{1}^{M})$ satisfies:
	\begin{align}
	n \minrate < \chunk \sum_{m=1}^{M} 
			\empMI{P}{\csi{m}(\cvar{y}{m}, \robrlestgap)} 
			- M \chunk \robrlloss~.
	\end{align}
Substituting this into (\ref{eq:rob:listsubexp}) we see that
	\begin{align*}
	N D\left(  L/N \ \|\  G \right)
	&> L \left( M \chunk \robrlloss - 2 M \chunk \robrlgap \right) 
		+ L \log L - 2 L~.
	\end{align*} 
For large enough $n$ we have the bound $(N+1)^2 \le 2 n \rho + L$.  For large enough $L$, $L \log L > 3L$, so we can ignore those terms as well.  This gives the following bound:
	\begin{align}
	\prob\left( 
	\frac{1}{N} \sum_{j = 1}^{N} \mbf{1}(\mbf{Z}_j \in B(M, \mbf{y}_{1}^{M\chunk}, \csi{1}^{M}) ) > L/N
	\right) & \nonumber \\
	&\hspace{-2in}
	\le \exp \left(- L M \chunk \left(\robrlloss - 2 \robrlgap \right) 
		+ 2 n \minrate 
		\right)~.
		\label{eq:rawlistbnd}
	\end{align}

The number of decoding bins $B(M, \mbf{y}_{1}^{M\chunk}, \csi{1}^{M})$ can be bounded by
	\begin{align*}
	\left| \left\{ 
		B(M, \mbf{y}_{1}^{M\chunk}, \csi{1}^{M}) :  
			\csi{1}^{M} \in \posscsi{\chunk}^M,\ 
			\mbf{y}_{1}^{M\chunk} \in \mc{Y}^{M \chunk}
		\right\} \right|  & \\
	&\hspace{-3cm}
		\le |\mc{Y}|^{M \chunk} \chunk^{M \csiexp}~.
	\end{align*}
Therefore we can take a union bound over all the decoding bins in (\ref{eq:rawlistbnd}) to get an upper bound of 
	\begin{align*}
	\exp \left(- L M \chunk \left(\robrlloss - 2 \robrlgap \right) 
		+ M \chunk \log |\mc{Y}| + M \csiexp \log \chunk 
		+ 2 n \minrate
		\right)~.
	\end{align*}

Since $\frac{n \minrate}{M \chunk} \le \log |\mc{Y}|$ for all $M \ge \Mlo$, we can choose $n$ and $\chunk$ sufficiently large such that the upper bound becomes
	\begin{align*}
	\exp \left(- L M \chunk \left(\robrlloss - 2 \robrlgap \right) 
		+ 4 M \chunk \log |\mc{Y}|  
		\right)~.
	\end{align*}
If $\robrlloss > 2 \robrlgap$ then we can choose
	\begin{align*}
	L > \frac{ 4 \log |\mc{Y}| }{ \left(\robrlloss - 2 \robrlgap \right) }
	\end{align*}
to guarantee that subsampling will yield a good list-decodable code for all $M \in \{\Mlo, \ldots, \Mhi\}$.  Choosing $\robrlgap = \robrlloss/3$ and $E(\robrlloss) = E_2(\robrlloss/3)$, where $E_2(\cdot)$ is from (\ref{eq:concatErr}), yields the result.
\end{proof}

\bibliographystyle{IEEEtran}
\bibliography{nosy}

\end{document}